\def\eqref#1{equation~\ref{#1}}
\def\1{\bm{1}}
\def\vc{{\bm{c}}}
\def\vu{{\bm{u}}}
\def\vv{{\bm{v}}}
\def\vx{{\bm{x}}}
\def\vy{{\bm{y}}}
\def\vz{{\bm{z}}}
\DeclareMathAlphabet{\mathsfit}{\encodingdefault}{\sfdefault}{m}{sl}
\SetMathAlphabet{\mathsfit}{bold}{\encodingdefault}{\sfdefault}{bx}{n}
\newcommand{\reals}{\mathbb{R}}
\newtheorem{theorem}{Theorem}
\newtheorem{definition}{Definition}
\newtheorem{lemma}{Lemma}
\newtheorem{corollary}{Corollary}
\newtheorem{example}{Example}
\newcommand{\relu}{\text{ReLU}}
\newcommand{\change}[1]{ #1}
\newcommand{\chen}[1]{{{\color{red} \textbf{(Chen: #1)}}}}
\newcommand{\zhou}[1]{{#1}}
\author{
    \IEEEauthorblockN{ Yixuan Wang\IEEEauthorrefmark{2},
    Weichao Zhou\IEEEauthorrefmark{1},
    Jiameng Fan\IEEEauthorrefmark{1}, Zhilu Wang\IEEEauthorrefmark{2}, 
    Jiajun Li\IEEEauthorrefmark{4},\\
    Xin Chen\IEEEauthorrefmark{3}, Chao Huang\IEEEauthorrefmark{4}, Wenchao Li\IEEEauthorrefmark{1}, Qi Zhu\IEEEauthorrefmark{2}}\\
    \IEEEauthorblockA{\IEEEauthorrefmark{1}Department of Electrical and Computer Engineering, Boston University, Boston, MA, USA
    \\\{zwc662, jmfan, wenchao\}@bu.edu}\\
    \IEEEauthorblockA{\IEEEauthorrefmark{2}Department of Electrical and Computer Engineering, Northwestern University, Evanston, IL, USA
    \\\{yixuanwang2024, zhilu.wang, qzhu\}@northwestern.edu}\\
    \IEEEauthorblockA{\IEEEauthorrefmark{3}Department of Computer Science, University of Dayton, Dayton, OH, USA
    \\xchen4@udayton.edu}\\
    \IEEEauthorblockA{\IEEEauthorrefmark{4}Department of Computer Science, University of Liverpool, Liverpool, UK
    \\\{chao.huang2, j.li234\}@liverpool.ac.uk}
    \thanks{Yixuan Wang and Weichao Zhou contributed equally to this work.}
}
\begin{document}


    \title{POLAR-Express: Efficient and Precise \\ Formal Reachability Analysis of \\ Neural-Network Controlled Systems}

\maketitle


\begin{abstract}
Neural networks (NNs) playing the role of controllers have demonstrated impressive empirical performances on challenging control problems. 
However, the potential adoption of NN controllers in real-life applications also gives rise to a growing concern over the safety of these neural-network controlled systems (NNCSs), especially when used in safety-critical applications. 
In this work, we present POLAR-Express, an efficient and precise formal reachability analysis tool for verifying the safety of NNCSs. POLAR-Express uses Taylor model arithmetic to propagate Taylor models (TMs) across a neural network layer-by-layer to compute an overapproximation of the neural-network function. 
It can be applied to analyze any feed-forward neural network with continuous activation functions. 
We also present a novel approach to propagate TMs more efficiently and precisely across ReLU activation functions. 
In addition,
POLAR-Express provides parallel computation support for the layer-by-layer propagation of TMs, thus significantly improving the efficiency and scalability over its earlier prototype POLAR. 
Across the comparison with six other state-of-the-art tools on a diverse set of benchmarks, POLAR-Express achieves the best verification efficiency and tightness in the reachable set analysis.

\end{abstract}
\begin{IEEEkeywords}
Neural-Network Controlled Systems; Reachability Analysis; Safety Verification; Formal Methods
\end{IEEEkeywords}

%
%

\section{Introduction} \label{sec:intro}

Neural networks have been successfully used as the central decision-makers in a variety of tasks such as autonomous vehicles~\cite{bojarski2016end, liu2022physics}, aircraft collision avoidance systems~\cite{julian2016policy}, robotics~\cite{levine2016end}, HVAC control~\cite{xu2020one, xu2021learning}, and other autonomous cyber-physical systems (CPS)~\cite{Julian2017}.
Neural-network controllers can be trained using machine learning techniques such as reinforcement learning~\cite{mnih2015human, Lillicrap2016ContinuousCW} and imitation learning from human demonstrations~\cite{abbeel2004apprenticeship,ng2000algorithms} or trajectories generated by model-predictive controllers~\cite{pan2018agile}.
However, the use of neural-network controllers also gives rise to new challenges in verifying the safety of these systems
due to the nonlinear and highly parameterized nature of neural networks and their closed-loop formations with 
dynamical systems
~\cite{huang2019reachnn,Dutta_Others__2019__Reachability,ivanov2018verisig,tran2020nnv}. In this work, we consider the following reachability verification problem of neural-network controlled systems (NNCSs).

\begin{definition}[Reachability Problem of NNCSs]
 The reachability problem of an NNCS is to verify whether a given state is reachable from an initial state of the system, whereas the bounded-time version of this problem is to verify whether a given state is reachable within a given bounded time horizon.
\end{definition}

Uncertainties around the initial state, such as those inherent in state measurement or localization systems, or scenarios where the system can start from anywhere in an initial space, require the consideration of an \textit{initial state set} as opposed to a single initial state in the reachability problem. 
It is worth noting that simulation-based
testings~\cite{zhang2020machine}, which sample initial states from the initial state set,  
cannot provide formal safety guarantees such as ``no system trajectory from the initial state set will lead to an obstacle collision."
In this paper, we consider reachability analysis as the class of techniques that aim at tightly over-approximating the set of all reachable states of the system starting from an initial state set. We illustrate an example of this reachability analysis of a closed-loop system in Figure~\ref{fig:reach}.

\begin{figure}[h!]
    \centering
    \includegraphics[height = 3.5cm]{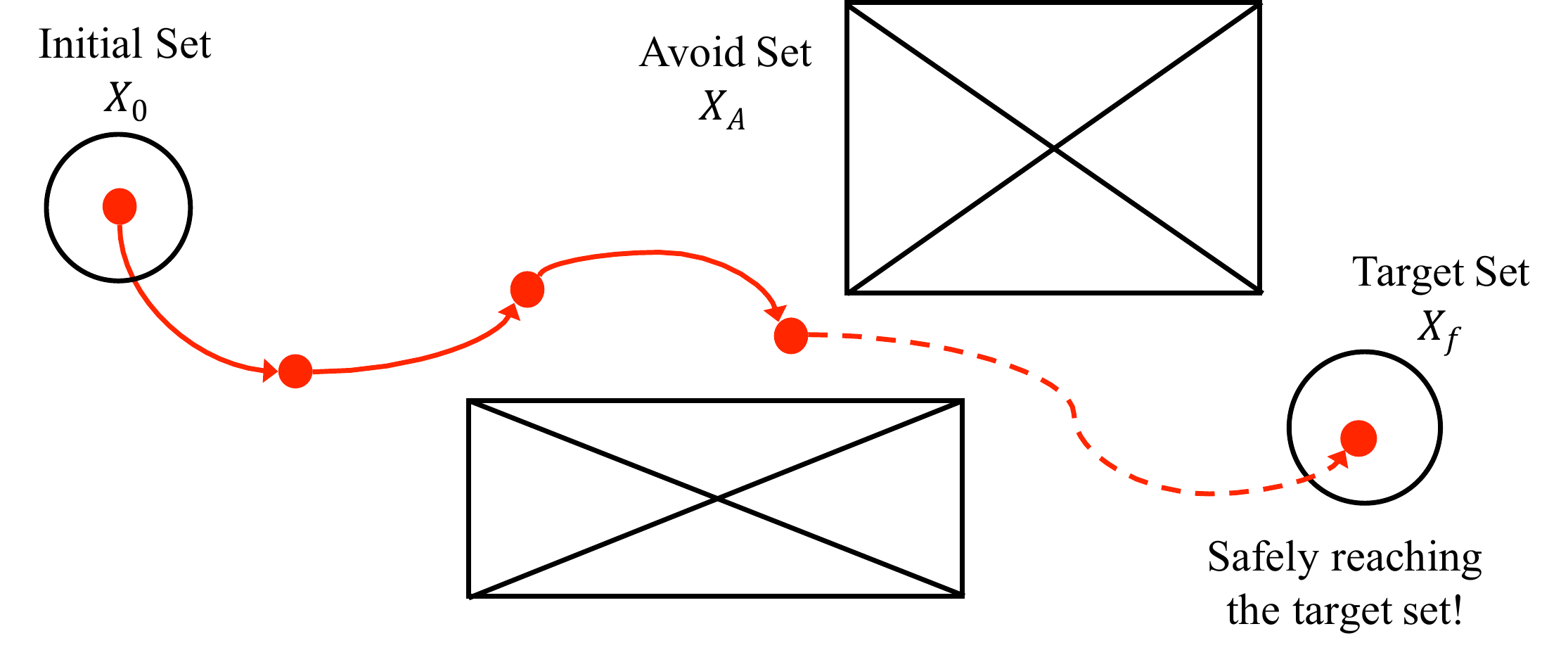}
    \caption{An illustrating example of reachability (for a reach-avoid specification). The system shown in the figure starts from a state in the initial state set $X_0$. Each red dot is the system state at the end of a control time-step, and the red curve is the system trajectory where each solid part indicates the system trajectory between two consecutive red-dot states. In this example, the system safely reaches the target set $X_f$ without hitting the avoid set $X_A$. The reachability verification problem is to check whether this is the case for all initial states in $X_0$ (for some given upper bound on the number of control time-steps). 
    }
    \label{fig:reach}
\end{figure}

Reachability analysis of general NNCSs is notoriously difficult due to nonlinearities that exist in both the neural-network controller and the plant. 
The closed-loop coupling of the neural-network controller with the plant adds another layer of complexity. 
To obtain a tight overapproximation of the reachable sets, reachability analysis needs to
\textit{track state dependencies across the closed-loop system and across multiple time steps}.
While this problem has been well studied in traditional closed-loop systems without neural-network controllers~\cite{Alur+/1995/hybrid_systems,Nedialkov/2011/vnode-lp,Frehse+/2011/SpaceEx,Althoff/2015/CORA,Chen+/2013/flowstar,Kong+/2015/dReach}, it is less clear whether it is important to track the state dependency in NNCSs and how to track the dependency efficiently given the complexity of neural networks. 
This paper aims to bring clarity to these questions by comparing different approaches for solving the NNCS reachability problems. 

Existing reachability analysis techniques for NNCSs typically use reachability analysis methods for dynamical systems as subroutines.  
For general nonlinear dynamical systems,
the problem of \textit{exact reachability is undecidable}~\cite{undecidability_ODE}. 
Thus, methods for reachability analysis of nonlinear dynamical systems aim at computing a \textit{tight over-approximation of the reachable sets}~\cite{dreossi2016parallelotope,lygeros1999controllers,yang2016linear,prajna2004safety,huang2017probabilistic,Frehse+/2011/SpaceEx,Chen+/2013/flowstar,Althoff/2015/CORA}.
On the other hand, 
there is also rich literature on verifying neural networks. 
Most of these verification techniques boil down to the problem of 
estimating or over-approximating the \textit{output ranges} of the network~\cite{huang2017safety,katz2017reluplex,Dutta+Others/2017/Reachability,wang2018formal,singh2018fast,wang2021beta}. 
The existence of these two bodies of work gives rise to a straightforward combination of NN output range analysis with reachability analysis of dynamical systems for solving the NNCS reachability problem. 
However, early works have shown that this naive combination with a non-symbolic interval-arithmetic-based~\cite{Moore+Others/2009/Interval} output range analysis suffers from large over-approximation errors when computing the reachable sets of the overall system~\cite{Dutta_Others__2019__Reachability,huang2019reachnn}.
The primary reason for its poor performance is the lack of  consideration of the interactions between the neural-network controller and the plant dynamics. 
Recent advances in the field of NN verification feature more sophisticated techniques that can yield tighter output range bounds and track the input-output dependency of an NN via symbolic bound propagation~\cite{wang2018formal,zhang2018efficient,wang2021beta}.
This opens up the possibility of improvement for the aforementioned combination strategy 
by substituting the non-symbolic interval-arithmetic-based technique with these new symbolic bound estimation techniques. 

New techniques have also been developed to directly address the verification challenge of NNCSs. Early works mainly use
\textit{direct end-to-end over-approximation} \cite{Dutta_Others__2019__Reachability,huang2019reachnn,fan2020reachnn} of the neural-network function, i.e. computing a function approximation of the neural network with guaranteed error bounds. While this approach can better capture the input-output dependency of a neural network compared to output ranges, it suffers from efficiency problems due to the need to sample from the input space. As a result, 
this type of technique cannot handle systems with more than a few dimensions. 
This approach is superseded by more recent techniques that leverage
\textit{layer-by-layer propagation} in the neural network \cite{ivanov2018verisig,verisig2,ivanov2020verifying,huang2022polar}.
Layer-by-layer propagation techniques have the advantage of being able to exploit the structure of the neural network. 
They are primarily based on 
propagating \textit{Taylor models} (TMs) layer by layer via 
Taylor model arithmetic 
to more efficiently obtain a function over-approximation of the neural network. 

\noindent\textbf{Scope and Contributions.}
We present POLAR-Express, a significantly enhanced version of our earlier prototype tool POLAR~\cite{huang2022polar}. Inherited from POLAR~\cite{huang2022polar}, POLAR-Express uses layer-by-layer propagation of TMs to compute function over-approximations of neural-network controllers. Our technique is applicable to general feed-forward neural networks with continuous (but not necessarily differentiable) activation functions. Compared with POLAR, POLAR-Express has the following new features. 
\begin{itemize}
    \item A more efficient and precise method for propagating TMs across non-differentiable ReLU activation functions.
    \item Multi-threading support to parallelize the computation in the layer-by-layer propagation of Taylor models for neural-network controllers, which significantly improved the efficiency and scalability of our approach for complex systems.
    \item Comprehensive experimental evaluation that includes new comparisons with recent state-of-the-art tools such as RINO~\cite{DBLP:conf/cav/GoubaultP22}, CORA~\cite{DBLP:journals/corr/abs-2210-10691}, and Juliareach~\cite{DBLP:conf/aaai/0001FG22}. Across a diverse set of benchmarks and tools, POLAR-Express achieves state-of-the-art verification efficiency and tightness of over-approximation in the reachable set analysis, outperforming all existing tools.
\end{itemize}

More specifically, compared with the existing literature~\cite{wang2021beta,tran2020nnv,DBLP:journals/corr/abs-2210-10691, DBLP:conf/aaai/0001FG22, DBLP:conf/cav/GoubaultP22, huang2022polar, verisig2}, we provide the most comprehensive experimental evaluation across a wide variety of NNCS benchmarks including neural-network controllers with different activation functions and dynamical systems with up to 12 states. In terms of the approach to over-approximating a neural-network controller, existing tools can be categorized into two classes. The first class shares the common idea of integrating NN output range analysis techniques with reachability analysis tools for dynamical systems, such as $\alpha,\beta$-CROWN~\cite{wang2021beta}, NNV~\cite{tran2020nnv}, CORA~\cite{DBLP:journals/corr/abs-2210-10691}, JuliaReach~\cite{DBLP:conf/aaai/0001FG22}, and RINO~\cite{DBLP:conf/cav/GoubaultP22}. The second class focuses on passing symbolic dependencies across the NNCS and across multiple control steps during reachability analysis, such as POLAR-Express and Verisig 2.0~\cite{verisig2}. Through the comparisons, we hope this paper can also serve as an accessible introduction to these analysis techniques for those who wish to apply them to verify NNCSs in their own applications 
and for those who wish to dive more deeply into the theory of NNCS verification.

\section{Background} \label{sec:related}

We first introduce the technical preliminaries and review existing techniques for the safety verification of NNCSs.
\change{
An NNCS is often defined by an ODE that is governed by a feed-forward neural network at discrete times. Although it is undecidable to know if a state is reachable for an NNCS starting from an initial state, we may compute an over-approximated set of reachable states. The safety of an NNCS can be proven by showing that the reachable set of this NNCS does not contain any unsafe state.
Although more general safety or robustness of an NNCS can be proved by computing an invariant for the system reachable states~\cite{9301422, wang2020energy}, it is still hard to handle a large number of system variables and general nonlinear dynamics. Hence, most of the existing methods for reachability analysis use the \emph{set propagation} scheme~\cite{ChenS22}. That is, an over-approximation of the reachable set in a bounded time horizon can be obtained by iteratively computing a super-set for the reachable set in a time step and propagating it to the set computation for the next step. More precisely, starting from a given initial set $X_0$, a set propagation method computes an over-approximation of the reachable set $\Phi_{t\in [0,\delta]}(X_0,t)$ where $\delta$ is the time step, $\Phi$ denotes the system's evolution function (flowmap) that is often unknown. It then repeats the above work from the obtained reachable set over-approximation at the end of the previous step and computes a new over-approximation for the current one. The over-approximation segments are called \emph{flowpipes}.
Such a scheme has been proven to be effective to handle various system dynamics and efficient in handling large numbers of state variables~\cite{Dutta_Others__2019__Reachability,tran2020nnv,ivanov2020verifying,DBLP:conf/aaai/0001FG22,huang2022polar,DBLP:conf/cav/GoubaultP22}. 

Algorithm~\ref{algo:flowpipe_nncs} shows the main framework of set propagation for NNCSs. Starting with a given initial set $X_0$, the main algorithm repeatedly performs the following two main steps to compute the flowpipes in the $(i+1)$-th control step for $i = 0,1,\dots, K-1$: (a) \textit{Computing the range $\mathcal{U}_i$ of the control input.} This task is to compute the output range of the NN controller w.r.t. the current system state. Since the current system state is a subset of the latest flowpipe, $\mathcal{U}_i$ is computed as an over-approximate set. (b) \textit{Flowpipe construction for the continuous dynamics.} According to the obtained range $\mathcal{U}_i$ of the constant control inputs, the reachable set in the current control step can be obtained using a flowpipe computation method for ODEs.

\begin{algorithm} [t]
 \caption{Reachable set computation for NNCSs based on set-propagation.}\label{algo:flowpipe_nncs}
 \begin{algorithmic}[1]
  \REQUIRE Definition of the system modules, number of control steps $K$, the initial state set $X_0$.
  \ENSURE Over-approximation of the reachable set in $K$ control steps.
  \STATE $\mathcal{R} \leftarrow \emptyset$; \COMMENT{the resulting over-approximation set}
  \STATE $\mathcal{X}_0 \leftarrow X_0$;
  \FOR{$i=0$ to $K-1$}
   \STATE Computing an over-approximation $\mathcal{U}_i$ for the NN output w.r.t. the input set $\mathcal{X}_i$;
   \STATE Computing a set $\mathcal{F}$ of flowpipes for the plant dynamics from the initial set $\mathcal{X}_i$ in a control step;
   \STATE $\mathcal{R} \leftarrow \mathcal{R} \cup \mathcal{F}$;
   \STATE Evaluating an over-approximation for the reachable set at the end of the control step and assigning it to $\mathcal{X}_{i+1}$;
  \ENDFOR
  \RETURN $\mathcal{R}$.
 \end{algorithmic}
\end{algorithm}

The existing methods can be mainly classified into the following two groups based on their over-approximation purposes.

\noindent\textbf{(I) Pure range over-approximations for reachable sets.}
The techniques in this group aim at directly over-approximating the range of the reachable set using geometric or algebraic representations such as intervals~\cite{Jaulin+/2001/applied_interval_analysis}, zonotopes~\cite{Ziegler/1995/Lectures} or other sets represented by constraints. Such an approach can often be developed by designing the over-approximation methods for the plant and controller individually and then using a higher-level algorithm to make the two methods work cooperatively for the closed-loop system. Many existing tools for computing the reachable set over approximations under the continuous dynamics defined by linear or nonlinear ODEs can be used to handle the plant, such as VNODE-LP\cite{Nedialkov/2011/vnode-lp}, SpaceEx~\cite{Frehse+/2011/SpaceEx}, CORA~\cite{Althoff/2015/CORA}, and Flow*~\cite{Chen+/2013/flowstar}. On the other hand, the task of computing the output range of a neural network can be handled by the output range analysis techniques most of which are developed in the recent years~\cite{duttoutputa2018, tjeng2019evaluating, cheng2017maximum, Lomuscio+Maganti/2017/Approach,huang2017safety,katz2017reluplex,ruan2018reachability,gehr2018ai2, singh2018fast,wang2018formal,tran2020nnv,zhang2018efficient,wang2021beta}.
The main advantage of the techniques in this group is twofold. Firstly, there is no need to develop a new technique from scratch, and the correctness of the composed approach can be proved easily based on the correctness of the existing methods for the subtasks. Secondly, the performance of the approach is often good on simple case studies since it can use well-engineered tools as primitives. However, since those methods mainly focus on the pure range over-approximation work, and do not just lightly track the dependencies among the state variables under the system dynamics, it may cause heavy accumulation of over-approximation error when the plant dynamics is nonlinear or the initial set is large, make the resulting bounds less useful in proving properties of interest.

\begin{figure}[t]
    \centering
    \includegraphics[width=.45\textwidth]{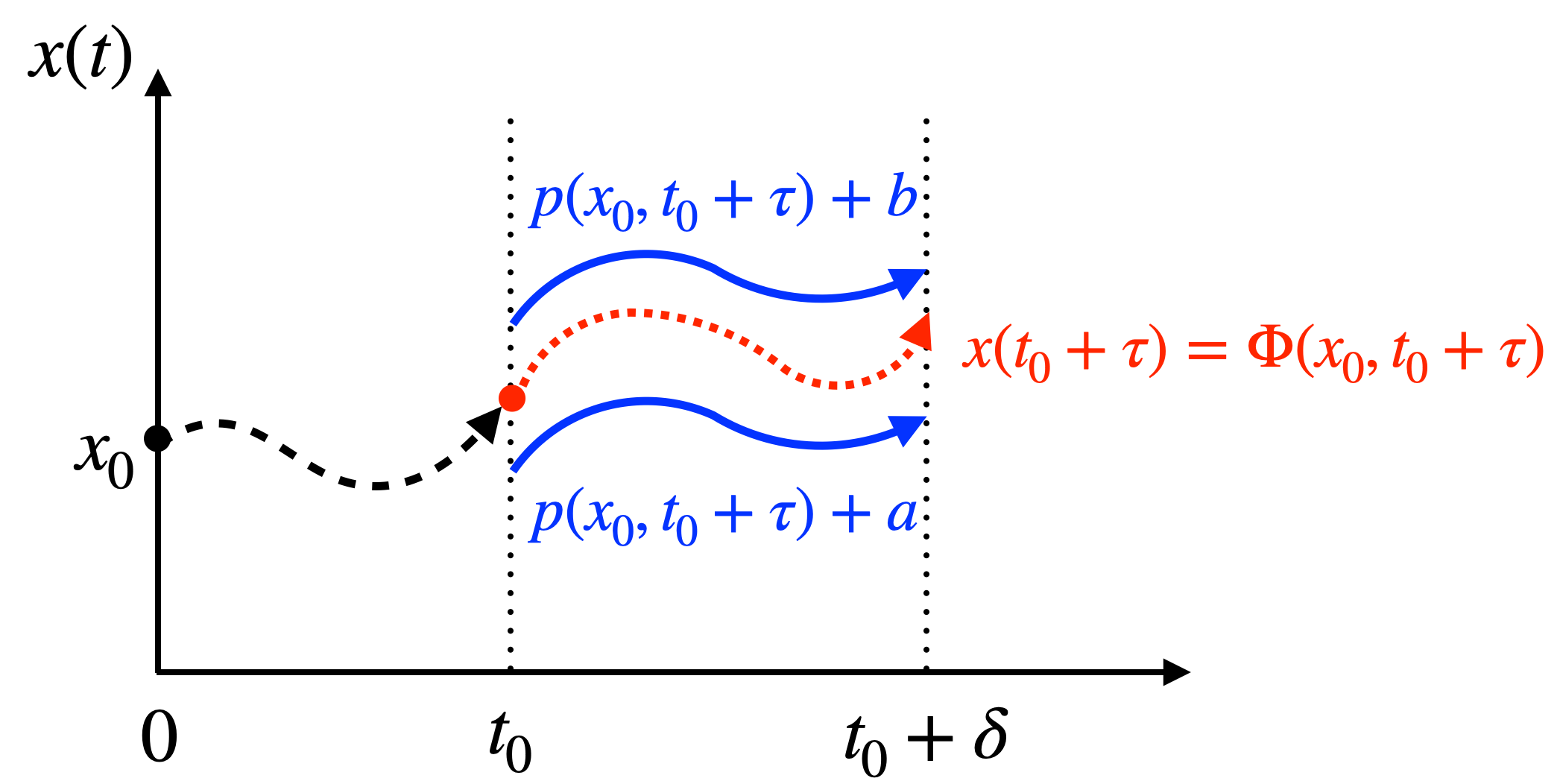}
    \caption{Taylor model over-approximation of a flowmap function.}
    \label{fig:tm_flowpipe}
\end{figure}

\noindent\textbf{(II) Functional over-approximations for system evolution.}
The reachable set over-approximation methods in this category focus on a more challenging task than only over-approximating the reachable set ranges. They seek to compute an over-approximate function for the flowmap $\Phi$ of an NNCS. As we pointed out in the previous section, $\Phi$ is a function only in the variables representing the initial state and the time, and it often does not have a closed-form expression. However, it can be over-approximated by a Taylor model (TM) over a bounded time interval. Figure~\ref{fig:tm_flowpipe} gives an illustration in which the TM $p(\vx_0,t_0+\tau) + [a,b]$ is guaranteed to contain the range of the function $\Phi(\vx_0,t_0 + \tau)$ for any initial state $\vx_0$ and $t\in [0,\delta]$. In practice, we usually require $\vx_0$ to be in a bounded set. Such a TM provides a functional over-approximation rather than a pure range over-approximation which allows tracking the dependency from a reachable state to the initial state approximately.
Functional over-approximations often can handle more challenging reachability analysis tasks, in which larger initial sets, nonlinear dynamics, or longer time horizons are specified.
Recent work has applied interval, polynomial, and TM arithmetic to obtain over-approximations for NNCS evolution~\cite{Dutta_Others__2019__Reachability,huang2019reachnn,IvanovCWAPL21,huang2022polar}. Those techniques are often able to compute more accurate flowpipes than the methods in the other group. On the other hand, the functional over-approximation methods are often computationally expensive due to the computation of nonlinear multivariate polynomials for tracking the dependencies.

\noindent\textbf{Existing tools.}
We consider the following tools in the experimental evaluation: NNV~\cite{tran2020nnv}, Verisig 2.0~\cite{verisig2}, CORA~\cite{DBLP:journals/corr/abs-2210-10691}, JuliaReach~\cite{DBLP:conf/aaai/0001FG22} and RINO~\cite{DBLP:conf/cav/GoubaultP22}. Additionally we also simply combine the use of $\alpha,\beta$-CROWN~\cite{wang2021beta} and Flow*~\cite{Chen+/2013/flowstar} to provide a baseline for the performance of pure range over-approximation. We summarize the key aspects of the tools in Table~\ref{tab:tool_summary}. Basically, NNV, CORA, JuliaReach, and RINO compute range over-approximations, while Verisig 2.0 and POLAR-Express compute functional over-approximations.

\begin{table}[t]
    \centering
    \caption{Summary of the tools evaluated in this paper.}
    \label{tab:tool_summary}
    \begin{adjustbox}{width=.95\columnwidth}
        \begin{tabular}{@{}c|c|c|c|c@{}}
            \toprule
            Tool & Category & \stackanchor{Plant}{Dynamics} & \stackanchor{Activation}{Function} & \stackanchor{Set}{Representation} \\
            \midrule
            \midrule
            \stackanchor{$\alpha,\beta$-CROWN}{+ Flow*} & (I) & nonlinear & continuous & \stackanchor{Interval and}{Taylor model} \\
            \midrule
            NNV & (I) & \stackanchor{discrete linear,}{continuous (CORA)} & \stackanchor{ReLU, tanh,}{sigmoid} & ImageStar \\
            \midrule
            Juliareach & (I) & nonlinear & continuous & Zonotope + Taylor model \\
            \midrule
            CORA & (I) & nonlinear & continuous & Polynomial zonotope \\
            \midrule
            RINO & (I) & nonlinear & differentiable & interval + interval Taylor series \\
            \midrule
            Verisig 2.0 & (II) & nonlinear & differentiable & Taylor model \\
            \midrule
            POLAR-Express & (II) & nonlinear & continuous & Taylor model \\
            \bottomrule
        \end{tabular}
    \end{adjustbox}
\end{table}

\noindent\textbf{Taylor models.}
Taylor models are originally proposed to compute higher-order over-approximations for the ranges of continuous functions (see~\cite{Berz+Makino/1998/Verified}). They can be viewed as a higher-order extension of intervals. A \emph{Taylor model (TM)} is a pair $(p,I)$ wherein $p$ is a polynomial of degree $k$ over a finite group of variables $x_1,\dots,x_n$ ranging in an interval domain $D\subset \reals^n$, and $I$ is the remainder interval. Given a smooth function $f(\vx)$ with $\vx\in D$ for some interval domain $D$, its TM can be obtained as $(p(\vx), I)$ such that $p$ is the Taylor expansion of $f$ at some $\vx_0 \in D$, and $I$ is an interval remainder such that $\forall \vx\in D.(f(\vx) \in p(\vx) + I)$, i.e., $p + I$ is an over-approximation of $f$ at any point in $D$. When the order of $p$ is sufficiently high, the main dependency of the $f$ mapping can be captured in $p$. Basically, the polynomial $p$ can be any polynomial approximation of the function $f$, and it is unnecessary to only use Taylor approximations.

When a function $f(\vx)$ is overly approximated by a TM $(p(\vx), I)$ w.r.t. a bounded domain $D$, the approximation quality, i.e., size of the overestimation, is directly reflected by the width of $I$, since $f(\vx) = p(\vx)$ for all $\vx\in D$ when $I$ is zero by the TM definition.
Given two order $k$ TMs $(p_1(\vx), I_1)$ and $(p_2(\vx), I_2)$ which are over-approximations of the same function $f(\vx)$ w.r.t. a bounded domain $D \subset \reals^n$, we use $(p_1(\vx), I_1) \prec_k (p_2(\vx), I_2)$ to denote that the width of $I_1$ is smaller than the width of $I_2$ in all dimensions, i.e., $(p_1(\vx), I_1)$ is a more accurate over-approximation of $f(\vx)$ than $(p_2(\vx), I_2)$.

TMs have been proven to be powerful over-approximate representations for the flowmap of nonlinear continuous and hybrid systems~\cite{Chen+/2012/taylor_models,Chen/2015/phd,Chen+Sankaranarayanan/2016/Decomposed}.
Although polynomial zonotopes~\cite{polynomial_zonotopes} are also polynomial representations, they are not expressed in the same variables as the system flowmap functions and therefore not functional over-approximations. Interval Taylor Series (ITS) are univariate polynomials in the time variable $t$ where the coefficients are intervals. ITS are often used as nonlinear range over-approximations for ODE solutions~\cite{Nedialkov/2011/vnode-lp}.

}

\section{Problem Formulation} \label{sec:problem}

We use the formal model presented in Figure~\ref{fig:architecture} to describe the behavior of an NNCS. It is a composition of four modules, each of which models the evolution or input-output mapping of the corresponding component in an NNCS. The top three modules form the controller of the system, it retrieves the sensor data $\vy$, computes the control input $\vu$, and applies it to the plant at discrete times $t = 0,\delta_c,\dots,k\delta_c,\dots$ for a control step size $\delta_c >0$. The roles of the modules are described below.

\begin{figure}[b]
    \centering
    \includegraphics[width=.45\textwidth]{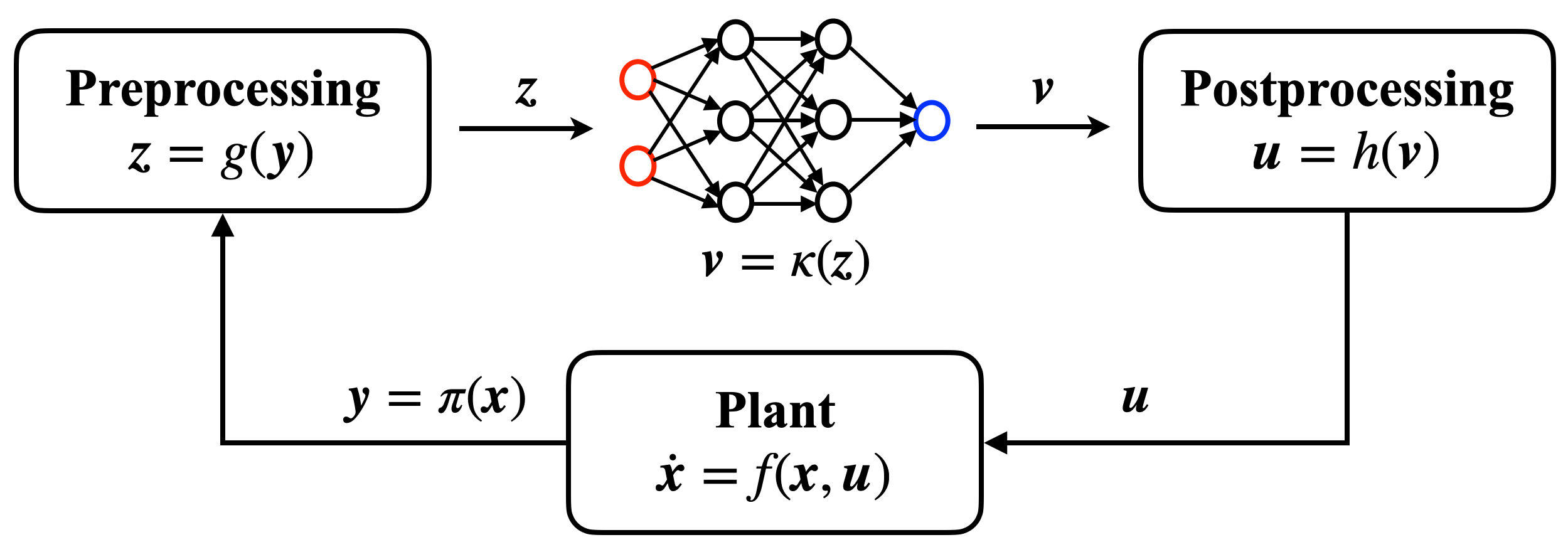}
    \caption{Formal model of NNCS.}
    \label{fig:architecture}
\end{figure}

\noindent\textbf{Plant.} This is a model of the physical process. We use an ODE over the $n$ state variables $\vx$ and $m$ control inputs $\vu$ to model the evolution of the physical process such as the movement of a vehicle, the rotation of a DC motor, or the pitch angle change of an aircraft. In the paper, we collectively represent a set of ordered variables $x_1,\dots,x_n$ by $\vx$. We only consider the ODEs which are at least locally Lipschitz continuous such that its solution w.r.t. an initial condition $\vx(0) = \vx_0\in \reals^n$ is unique~\cite{Perko/2006/ODE}.

\noindent\textbf{Preprocessing Module.} This module transforms the sample data. It serves as the gate of the controller. At the time $t = k\delta_c$, for every $k=0,1,\dots$, it retrieves the sensor data $\vy$ which can be viewed as the image under a mapping from the actual system state $\vx(t)$, and further transform it to an appropriate format $\vz$ for the controller's NN component. A typical preprocessing task in a collision avoidance control could be computing the relative distances of the moving objects.

\noindent\textbf{Neural Network.} This is the core computation module for the control inputs. It maps the input data $\vz$ to the output value $\vv$ according to the layer-by-layer propagation rule defined in it. In the paper, we only consider \emph{feed-forward neural networks}. Since the paper focuses on the formal verification of NNCSs, the neural network is explicitly defined as a part of the NNCS.

\noindent\textbf{Postprocessing Module.} This module transforms the NN output value to the control input. Typically, it is used to keep the final control input in the actual actuating range or to filter out inappropriate input values.

We assume that the preprocessing and postprocessing modules can only be defined by a conjunction of \emph{guarded transitions} each of which is in the form of
\begin{equation}\label{eq:guarded_transition}
 \gamma(\vx)\ \rightarrow\ \vx' = \Pi(\vx)
\end{equation}
such that the guard $\gamma(\vx)$ is a conjunction of inequalities in $\vx$, and $\Pi$ is a transformation from $\vx$. Here, we assume that all guards are disjoint, and allow (\ref{eq:guarded_transition}) to have polynomial arithmetic and the elementary functions $\sin(\cdot)$, $\cos(\cdot)$, $\exp(\cdot)$, $\log(\cdot)$, $\sqrt{\ \cdot\ }$. Then the expressiveness is sufficient to define lookup tables.

\begin{example}[2D Spacecraft Docking]\label{exam:docking}
We consider the docking of a spacecraft in a 2D plane. The benchmark is described in~\cite{ravaioli2022safe}. As shown in Figure~\ref{fig:ex1}, the control goal is to steer the spacecraft to the position at the origin while the velocity should be kept in a safe range. The whole benchmark can be modeled by an NNCS with 5 variables: $\vx = (x,y,v_x,v_y,v_{\text{safe}})^T$ wherein $(x,y)$ denotes the position of the spacecraft, $(v_x,v_y)$ denotes the velocity, and $v_{\text{safe}} = 0.2 + 0.002054\sqrt{x^2 + y^2}$ is a particular variable that indicates a position-dependent safe limit on the speed. The dynamics is defined as in Eq.\ref{docking1} where $f_x$ and $f_y$ constitute the control input $\vu=(f_x, f_y)$ which is obtained by a neural network controller $\kappa$. The input $\vz=(\frac{x}{1000}, \frac{y}{1000}, 2v_x, 2v_y, \sqrt{v_x^2 + v_y^2}, v_{safe})$ of the neural network is prepossessed from $\vx$. 
The output $\vv = (u_x, u_y)$ of the neural network is postprocessed to $f_x=\tanh(u_x), f_y = \tanh(u_y)$. 
\begin{equation}\label{docking1}
    \begin{aligned}
    & \dot{x} = v_x,\quad \dot{y} = v_y, \quad \dot{v}_{safe}=\frac{0.002054 (x\cdot v_x + y\cdot  v_y)}{v_{safe}}\\
    & \dot{v}_x= 0.002054 v_y+3\times (0.001027)^2 x + \frac{f_x}{12},\\  &\dot{v}_y=-0.002054 v_x + \frac{f_y}{12}, \\
    \end{aligned}
\end{equation}
\end{example}

 \begin{figure}[tp!]
\captionsetup[subfloat]{farskip=2pt,captionskip=1pt}
	\centering	
	\subfloat[][]{\scalebox{1.0}{%
		\includegraphics[width=0.45\textwidth]{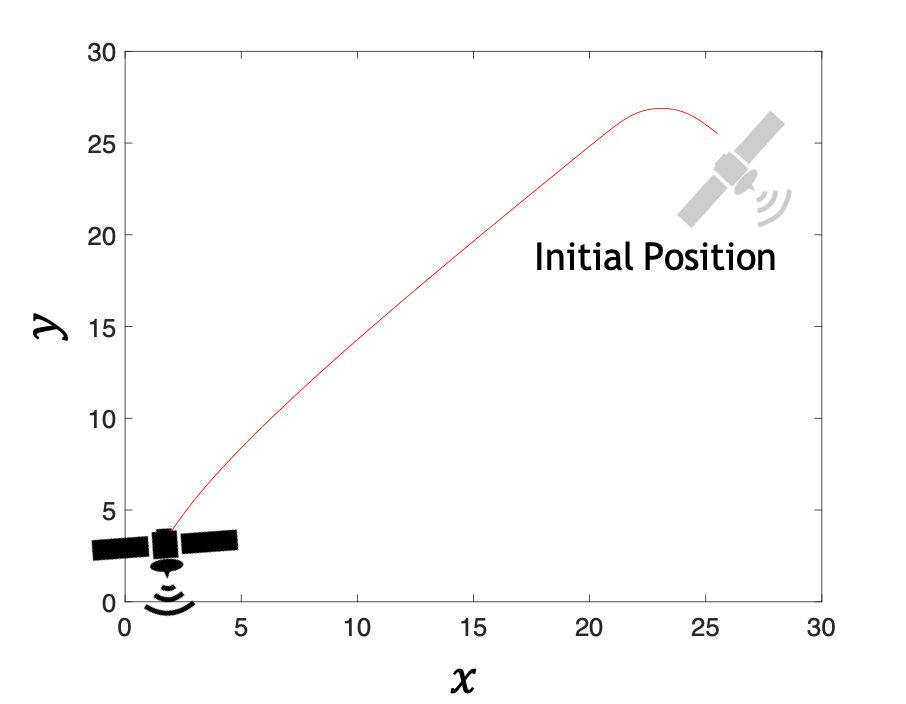}%
		\label{fig:ex1_pos}%
	}}\ 
	\subfloat[][]{\scalebox{1.05}{%
		\includegraphics[width=0.47\textwidth]{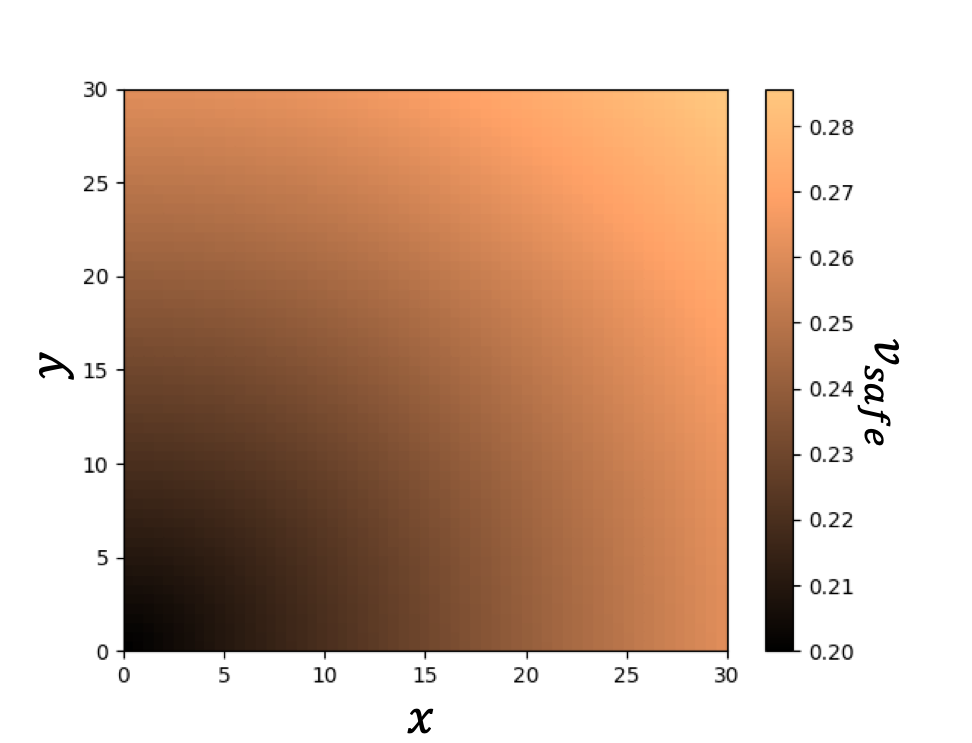}%
		\label{fig:ex1_velsafe}%
	}}
	\caption{The 2D spacecraft docking environment. (a) The goal is to start from some initial position and move towards the origin. (b) The safe speed limit $v_{safe}$ decreases as the position of the spacecraft approaches the origin.}
	\label{fig:ex1}
\end{figure}





\noindent\textbf{Executions of NNCSs.}
Starting from an initial state $\vx_0\in \reals^n$, for all $i=0,1,\dots$, the system state $\vx(t)$ in the $(i+1)$-st control step $t\in [i\delta_c,(i+1)\delta_c]$ is defined by the solution of the ODE $\dot{\vx} = f(\vx,\vu_i)$ w.r.t. the initial state $\vx(i\delta_c)$ and the control input $\vu_i$ which is obtained as $\vu_i = h \circ \kappa \circ g \circ \pi(\vx(i\delta_c))$, where $h, \kappa, g, \pi$ are shown in Fig~\ref{fig:architecture}. If we denote the solution of the ODE w.r.t an initial state $\vx_0$ and a particular control input $\vu'$ by $\vx(t) = \Phi_f(\vx_0,t,\vu')$, the system state at a time $t\in [i\delta_c, (i+1)\delta_c]$ for any $i\geq 0$ from the initial state $\vx_0$ can be expressed recursively by
\begin{equation}\label{eq:evolution}
  \begin{aligned}
   & \vx(t) = \Phi_f(\vx(i\delta_c),t-i\delta_c,\vu_i) \\
   & \text{where } \vu_i = h \circ \kappa \circ g \circ \pi(\vx(i\delta_c))
  \end{aligned}
\end{equation}
such that $\vx(0) = \vx_0$. We also call this state a \emph{reachable state}. Without noises or disturbances from the environment, an NNCS has deterministic behavior, and its evolution can be defined by a \emph{flowmap} function in the form of $\Phi(\vx_0,t)$, i.e., the reachable state from an initial state $\vx_0$ at a time $t$ is $\vx(t) = \Phi(\vx_0,t)$, and it is \emph{uniquely determined by the initial state and the time}. Unfortunately, $\Phi$ usually does not have a closed-form expression.

The reachability analysis task with respect to two given system states $\vx,\vx'$ and an NNCS asks whether $\vx'$ is reachable from $\vx$ under the evolution of the system.
Reachability analysis plays a key role in the safety verification of dynamical systems. However, it is a notoriously difficult task due to the undecidability of the reachability problem on the systems defined by even nonlinear difference equations~\cite{10.1007/BFb0109720}. In order to prove the safety of a system, most of the reachability techniques seek to compute an over-approximation of the reachable set. If no unsafe state is contained in the over-approximated reachable set, then the system is guaranteed to be safe.


\section{Framework of POLAR-Express}\label{sec:polar_express}

We present the POLAR-Express framework in Algorithm~\ref{algo:polar_express} to compute flowpipes for NNCSs. It uses the standard set-propagation framework Algorithm~\ref{algo:flowpipe_nncs} but has the following novel elements: \textit{(1) Polynomial over-approximation for activation functions using \emph{B{\'e}zier curves}.}
\textit{(2) Symbolic representation of TM remainders in layer-by-layer propagation.}
\textit{(3) A seamless integration of the above techniques to compute accurate flowpipes for NNCSs.}  \textit{(4) A more precise and efficient method for propagating TMs across ReLU activation function.} \textit{(5) Multi-threading support to parallelize the computation in the layer-by-layer propagation of TMs for NN.} The details are explained below.

\begin{algorithm} [t]
 \caption{Framework of POLAR-Express.}\label{algo:polar_express}
 \begin{algorithmic}[1]
  \REQUIRE Plant dynamics $\vx' = f(\vx,\vu)$, preprocessing $g(\cdot)$, postprocessing $h(\cdot)$, NN controller $\kappa(\cdot)$, number of control steps $K$, initial set $X_0$.
  \ENSURE Over-approximation of the reachable set over the time interval of $[0, K\delta_c]$ where $\delta_c$ is the control step size.
  \STATE $\mathcal{R} \leftarrow \emptyset$;
  \STATE $\mathcal{X}_0 \leftarrow X_0$;
  \FOR{$i=0$ to $K-1$}
   \STATE Computing a superset $\mathcal{Y}_i$ for the range of $\pi(\mathcal{X}_i)$;
   \STATE Computing a superset $\mathcal{Z}_i$ for the range of $g(\mathcal{Y}_i)$;
   \STATE Computing a superset $\mathcal{V}_i$ for the range of $\kappa(\mathcal{Z}_i)$, with multi-threading support;
   \STATE Computing a superset $\mathcal{U}_i$ for the range of $h(\mathcal{V}_i)$;
   \STATE Computing a set $\mathcal{F}$ of flowpipes for the continuous dynamics $\dot{\vx} = f(\vx,\vu)$ with $\vu\in \mathcal{U}_i$ from the initial set $\vx(0) \in \mathcal{X}_i$ over the time interval of $[i\delta_c,(i+1)\delta_c]$;
   \STATE $\mathcal{R} \leftarrow \mathcal{R} \cup \mathcal{F}$;
   \STATE Evaluating an overapproximation for the reachable set at the time $t = (i+1)\delta_c$ based on $\mathcal{F}$ and assigning it to $\mathcal{X}_{i+1}$;
  \ENDFOR
  \RETURN $\mathcal{R}$.
 \end{algorithmic}
\end{algorithm}

\noindent\textbf{Computing $\mathcal{Y}_i$ and $\mathcal{U}_i$.}
Given a preprocessing or postprocessing module and its input set which is represented as a TM, an output TM can be obtained by computing the reachable sets of the guarded transitions. Given a guarded transition of the form (\ref{eq:guarded_transition}) along with a TM $S$ for $\vx$'s range, the reachable set $S'$, i.e., the range of $\vx'$, can be computed by first computing the intersection $S_I = S\,\cap\,\{\vx\,|\,\gamma(\vx)\}$ and then evaluating $\Pi(S_I)$ using TM arithmetic~\cite{Makino+Berz/2003/Taylor}. Although TMs are not closed under an intersection with a semi-algebraic set, we may use the domain contraction method proposed in~\cite{Chen+/2012/taylor_models} to derive an over-approximate TM for the intersection.

\subsection{Layer-by-Layer Propagation using TMs}

POLAR-Express uses the layer-by-layer propagation scheme to compute a TM output for the NN, and features the following key novelties:
(a) A method to selectively compute Taylor or Bernstein polynomials for activation functions. The purpose is to \emph{derive a smaller error according to the approximated function and its domain}. The Bernstein polynomials are represented in their \emph{B{\'e}zier forms}.
(b) A technique to symbolically represent the intermediate linear transformations of TM interval remainders during the layer-by-layer propagation. The purpose of using Symbolic Remainders \textbf{(SR)} is to \emph{reduce the accumulation of overestimation produced in composing a sequence of TMs}. The approach is described as follows.

\noindent\textbf{Layer-by-Layer Propagation with Multi-Threading Support}.
The framework of layer-by-layer propagation has been widely used to compute NN output ranges. Most of the existing methods use range over-approximations such as intervals with constant bounds~\cite{duttoutputa2018,weng2018evaluating} or linear polynomial bounds~\cite{wang2018formal}, zonotopes~\cite{gehr2018ai2,singh2018fast}, star sets~\cite{tran2020nnv}. Some TM-based methods are also proposed~\cite{Dutta_Others__2019__Reachability,huang2019reachnn,ivanov2020verifying} to obtain functional over-approximations for the input-output mapping of an NN. However, the use of functional over-approximations in the reachability analysis of an NNCS as a whole has not been well investigated. Hence, we propose the following approach to better over-approximate the input-output dependency than the existing state-of-the-art.

\begin{algorithm} [t]
 \caption{Layer-by-layer propagation using polynomial arithmetic and TMs}\label{algo:nn_output}
 \begin{algorithmic}[1]
  \REQUIRE Input TM $(p_1(\vx_0),I_1)$ with $\vx_0\in X_0$, the $M+1$ matrices $W_1,\dots,W_{M+1}$ of the weights on the incoming edges of the hidden and the output layers, the $M+1$ vectors $B_1,\dots,B_{M+1}$ of the neurons' bias in the hidden and the output layers.
  \ENSURE a TM $(p_r(\vx_0),I_r)$ that contains the set $\kappa((p_1(\vx_0),I_1))$.
  \STATE $(p_r,I_r) \leftarrow (p_1,I_1)$;
  \FOR{$i=1$ to $M+1$}
   \STATE $(p_t,I_t) \ \leftarrow\ W_i \cdot (p_r,I_r) + B_i$;
   \STATE Computing a polynomial approximation $p_{\sigma,i}$ for the vector of the current layer's activation functions $\sigma$ w.r.t. the domain $(p_t,I_t)$;
   \STATE Evaluating a conservative remainder $I_{\sigma,i}$ for $p_{\sigma,i}$ w.r.t. the domain $(p_t,I_t)$;
   \STATE $(p_r,I_r) \ \leftarrow\ p_{\sigma,i}(p_t + I_t) + I_{\sigma,i}$;
  \ENDFOR
  \RETURN $(p_r,I_r)$.
 \end{algorithmic}
\end{algorithm}

Algorithm~\ref{algo:nn_output} presents the main framework of our approach without using SR and focuses on the novelty coming from the tighter TM over-approximation for the activation functions (Lines 4 and 5).
Before introducing our selective over-approximation method, we describe how a TM output is computed from a given TM input for a single layer. The idea is illustrated in Fig.~\ref{fig:single_layer_composition}. The circles in the right column denote the neurons in the current layer which is the $(i+1)$-th layer, and those in the left column denote the neurons in the previous layer. The weights on the incoming edges to the current layer are organized as a matrix $W_i$, while we use $B_i$ to denote the vector organization of the biases in the current layer. Given that the output range of the neurons in the previous layer is represented as a TM (vector) $(p_i(\vx_0), I_i)$ where $\vx_0$ are the variables ranging in the NNCS initial set. Then, the output TM $(p_{i+1}(\vx_0), I_{i+1})$ of the current layer can be obtained as follows. First, we compute the polynomial approximations $p_{\sigma_1,i},\dots,p_{\sigma_l,i}$ for the activation functions $\sigma_1,\dots,\sigma_l$ of the neurons in the current layer. Second, interval remainders $I_{\sigma_1,i},\dots,I_{\sigma_l,i}$ are evaluated for those polynomials to ensure that for each $j=1,\dots,l$, $(p_{\sigma_j,i},I_{\sigma_j,i})$ is a TM of the activation function $\sigma_j$ w.r.t. $z_j$ ranging in the $j$-th dimension of the set $W_i(p_i(\vx_0) + I_i)$. Third, $(p_{i+1}(\vx_0,I_{i+1}))$ is computed as the TM composition $p_{\sigma,i}(W_i(p_i(\vx_0) + I_i) + I_{\sigma,i}$ where $p_{\sigma,i}(\vz) = (p_{\sigma_1,i}(z_1),\dots,p_{\sigma_l,i}(z_k))^T$ and $I_{\sigma,i} = (I_{\sigma_1,i},\dots,I_{\sigma_l,i})^T$. Hence, when there are multiple layers, starting from the first layer, the output TM of a layer is treated as the input TM of the next layer, and the final output TM is computed by composing TMs layer-by-layer. \zhou{Besides, we use $(p_{j,i}, I_{j, i})$ for $j=1,\ldots, l$ to represent the TMs associated with the $l$ neurons in a linear layer. The computation of those TMs can be conducted in parallel. So is the propagation through the activation functions in a layer. POLAR-Express realizes such parallelism via multi-threading to retain time efficiency when the dimension of the NN layers is large.}

\begin{figure}
  \begin{center}
    \includegraphics[width=0.4\textwidth]{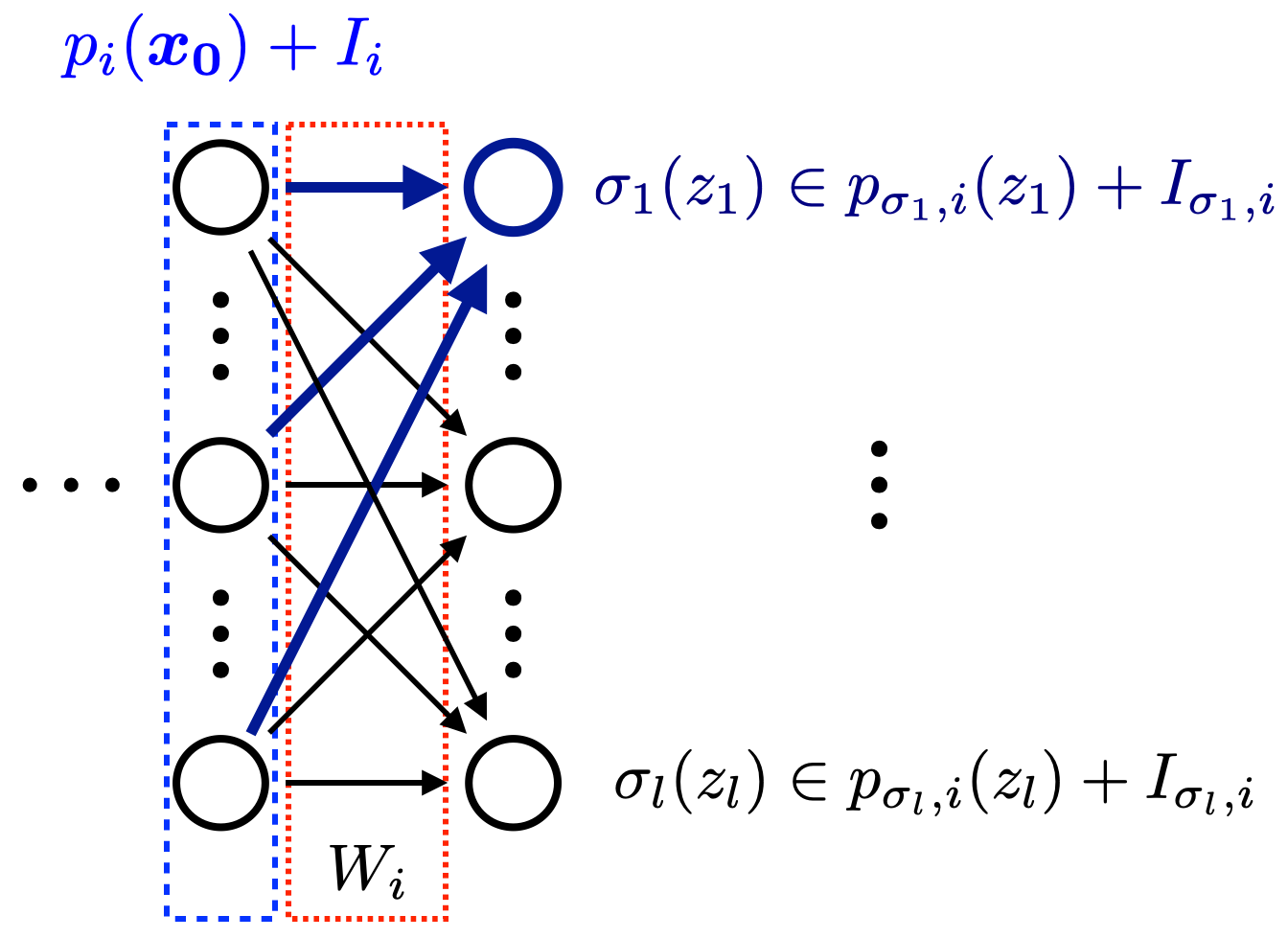}
  \end{center}
  \caption{Single layer propagation}\label{fig:single_layer_composition}
\vspace{-5mm}
\end{figure}

\noindent\textbf{Polynomial Approximations to TMs.}
Basically, a TM only defines an over-approximate mapping and is independent of the approximation method used for the polynomial part. Thus, we consider using both Taylor and Bernstein approximations when propagating through an activation function and choose the one that produces less overestimation after a TM combination. The following example shows that the selection cannot be determined only based on the approximation error. Given the TMs $(p_1,I_1)$, $(p_2,I_2)$ which are both TM over-approximations for the sigmoid function $f(x) = \frac{1}{1 + e^{-x}}$ w.r.t. a TM domain $x\in q(y) + J$:
\[
 \begin{aligned}
  (p_1,I_1) & = (0.5 + 0.25 x - 0.02083 x^3, [\text{-7.93e-5, 1.92e-4}]) \\
  (p_2,I_2) & = (0.5 + 0.24855 x - 0.004583 x^3, [\text{-2.42e-4, 2.42e-4}]) \\
  (q, J) & = (0.1 y - 0.1 y^2, [\text{-0.1,0.1}]).
 \end{aligned}
\]
where $y\in [-1,1]$. However the composition $(p_1(q(y) + J) + I_1)$ produces a TM with the remainder $[-0.0466 , 0.0477]$, while the remainder produces by $p_2(q(y) + J) + I_2$ is $[-0.0253 , 0.0253]$ which is smaller. In other words, a smaller polynomial approximation error does not always lead to a smaller error in combination. Therefore, it motivates us to do a selection after the combination. We generalize this phenomenon by defining the \emph{accuracy preservation problem}, and obviously, the answer is no if TM arithmetic is used.

\begin{definition}[Accuracy preservation problem]
 If both $(p_1(\vx), I_1)$ and $(p_2(\vx), I_2)$ are over-approximations of $f(\vx)$ with $\vx\in D$, and $(p_1(\vx), I_1) \prec_k (p_2(\vx), I_2)$. Given another function $g(\vy)$ which is already over-approximated by a TM $(q(\vy), J)$ whose range is contained in $D$. Then, \emph{does $p_1(q(\vy) + J) + I_1\, \prec_k\, p_2(q(\vy) + J) + I_2$ still hold using order $k$ TM arithmetic?}
\end{definition}

\subsection{Bernstein Over-approximation for Activation Functions}

Now we turn to our Bernstein over-approximation method for activation functions. It first computes a Bernstein polynomial for the function and then evaluates a remainder interval to ensure the over-approximation. The polynomials are in the B{\'e}zier form.

\begin{definition}[Bernstein polynomial]
 Given a continuous function $f(x)$ with $x\in [a,b]$, its order $k$ Bernstein polynomial $p_k(x)$ is defined by
 \begin{equation}\label{eq:berstein}
  \sum_{i=0}^n \left( f\left(a + \frac{i}{k}(b-a)\right) \binom{k}{i} \left(\frac{x-a}{b-a} \right)^i \left(\frac{b-x}{b-a}\right)^{k-i} \right)
 \end{equation}
\end{definition}

\noindent\textbf{Bernstein approximation in B{\'e}zier form.}
The use of Bernstein approximation only requires the activation function to be continuous in $(p_t,I_t)$, and can be used not only in more general situations but also to obtain better polynomial approximations than Taylor expansions (see~\cite{Lorentz/Bernstein}). We first give a general method to obtain a Bernstein over-approximation for an arbitrary continuous function, and then present \textit{a more accurate approach only for ReLU functions}. Given that the activation functions in a layer are collectively represented as a vector $\sigma(\vz)$ and $\vz$ ranges in a TM $(p_t,I_t)$. Then the order $k$ Bernstein polynomial $p_{\sigma_j,i}(z_j)$ for the activation function $\sigma_j$ of the $j$-th neuron. It can be computed as (\ref{eq:berstein}) while $f$ is $\sigma_j$, $a,b$ are the lower and upper bound respectively of the range in the $j$-th dimension of $(p_t,I_t)$, and they can be obtained from an interval evaluation of the TM.

\noindent\textbf{Remainder Evaluation.}
The remainder $I_{\sigma_j,i}$ for the polynomial $p_{\sigma_j,i}$ can be obtain as a symmetric interval $[-\epsilon_j,\epsilon_j]$ such that $\epsilon_j$ is
\[
 \begin{aligned}
    \max_{s=1,{\cdots},m}& \left( \left| p_{\sigma_j,i}(\frac{\overline{Z
    }_j{-}\underline{Z}_j}{m}(s{-}\frac{1}{2}) {+}\underline{Z}_j) \right.\right. \\
    & \left. \left. {-} {\sigma_j}(\frac{\overline{Z}_j-\underline{Z}_j}{m}(s-\frac{1}{2}){+}\underline{Z}_j)\right| {+} L_j{\cdot}\frac{\overline{Z}_j{-}\underline{Z}_j}{m} \right)
 \end{aligned}
\]
wherein $L_j$ is a Lipschitz constant of $\sigma_j$ with the domain $(p_t,I_t)$, and $m$ is the number of samples that are uniformly selected to estimate the remainder. The soundness of the error bound estimation above has been proved in \cite{huang2019reachnn} for multivariate Bernstein polynomials. Since the univariate Bernstein polynomial, which we use in this paper, is a special case of multivariate Bernstein polynomials, our approach is also sound.

\noindent\textbf{A More Precise and Efficient Bernstein Over-approximation for ReLU.}
The above Bernstein over-approximation method works on all continuous activation functions, however, if a function is convex or concave on the domain of interest, a more accurate Bernstein over-approximation that is represented as a TM can be obtained as follows. Given a continuous function $f(x)$ with $x\in D$ such that $f$ is convex on the domain, then the Bernstein polynomials of $f$ are no smaller than $f$ at any point in $D$. Thus, a tight upper bound for $f$ can be computed as one of its Bernstein polynomial $p$, while a tight lower bound can be obtained by moving $p$ straight down by the distance which is equivalent to the maximum difference of $f$ and $p$ for $x\in D$. When $f$ is ReLU and $0\in D$, it is convex on $D$, and its maximum difference to any Bernstein polynomial $p$ is $p(0)$. We give the particular Bernstein over-approximation method for ReLU functions by Algorithm~\ref{algo:bern_relu}. An example is illustrated in Fig.~\ref{fig:bp_relu_ub_lb}.

\begin{lemma}\label{lem:bernstein_bound}
 Given that $p_k(x)$ is the order $k\geq 1$ Bernstein polynomial of a convex function $f(x)$ with $x\in [a,b]$. For all $x\in [a,b]$, we have that (i) $f(x) \leq p_k(x)$ and (ii) $p_{k+1}(x) \leq p_k(x)$.
\end{lemma}
\begin{proof}
 The Lemma is proved in~\cite{goodman/phillips_1999} for the domain $x \in [0,1]$. However, it also holds on an arbitrary domain $x \in [a,b]$ after we replace the lower and upper bounds in the Bernstein polynomials by $a$ and $b$.
\end{proof}

\begin{corollary}
 If $p(x)$ is the order $k\geq 1$ Bernstein polynomial of $\relu(x)$ with $x\in [a,b]$, then $0\leq \relu(x) \leq p(x)$ for all $x\in [a,b]$.
\end{corollary}

\begin{lemma}
 Given that $p(x)$ is the order $k\geq 1$ Bernstein polynomial of $\relu(x)$ with $x\in [a,b]$ such that $a < 0 < b$, then we have that $p(x) - \relu(x)\leq p(0)$ for all $x\in [a,b]$.
\end{lemma}
\begin{proof}
 Since $\relu(x)$ is convex over the domain, by~\cite{goodman/phillips_1999}, so is $p(x)$. Therefore, the second derivative of $p$ w.r.t. $x$ is non-negative. By evaluating the first derivatives of $p$ at $x = a$ and $x = b$, we have that $\frac{dp}{dx}|_{x = a} \geq 0$ and $\frac{dp}{dx}|_{x = b} \leq 1$. Since the first derivatives of $\relu(x)$ are $0$ and $1$ when $x\in [a,0)$ and $x\in (0,b]$ respectively, $p(a) = \relu(a)$, and $p(b) = \relu(b)$, we have that the function $p(x) - \relu(x)$ monotonically increasing when $x\in [a,0]$ and decreasing when $x\in [0,b]$, hence its maximum value is given by $p(0)$.
\end{proof}

\begin{example}
 \zhou{In~Fig\ref{fig:express_vs_ori}, an NNCS with ReLU as the activation functions starts from an initial set near $(0.4, 0.45)$ and moves towards a target set enclosed by the yellow rectangle (see details in benchmark 5~\cite{Dutta_Others__2019__Reachability,huang2019reachnn}). POLAR-Express with this new precise and efficient over-approximation approach for the ReLU function generates tighter flowpipes than POLAR. The runtime of POLAR-Express is $1.8$s while the runtime of POLAR is $7.1$s. This result serves as the evidence that this novel Bernstein Over-approximation for ReLU achieves better verification efficiency and accuracy.}
\end{example}
\begin{figure}[!ht]
    \centering
    \includegraphics[width=0.8\linewidth]{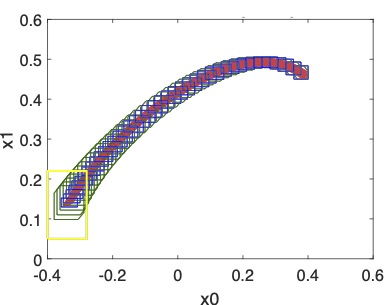}
    \caption{POLAR-Express (blue) generates tighter over-approximations than POLAR (green) under the same hyper-parameters; the red curves are the simulated traces; the yellow rectangle is the target set.} 
    \label{fig:express_vs_ori}
\end{figure}

\begin{algorithm} [t]
 \caption{Efficient and Tight Bernstein over-approximation for ReLU functions}\label{algo:bern_relu}
 \begin{algorithmic}[1]
  \REQUIRE Domain $D = [a,b]$ ($a\leq b$) of the ReLU function, approximation order $k$.
  \ENSURE a TM that overapproximates the ReLU function on $D$.
  \IF{$a \geq 0$}
   \RETURN $(x, [0,0])$;
  \ELSIF{$b\leq 0$}
   \RETURN $(0, [0,0])$;
  \ELSE
   \STATE Computing the order $k$ Bernstein polynomial $p$;
   \STATE $\varepsilon \leftarrow p(0)$;
   \RETURN $(p - 0.5\varepsilon, [-0.5\varepsilon, 0.5\varepsilon])$;
  \ENDIF
 \end{algorithmic}
\end{algorithm}

\begin{figure}
  \begin{center}
    \includegraphics[width=0.48\textwidth]{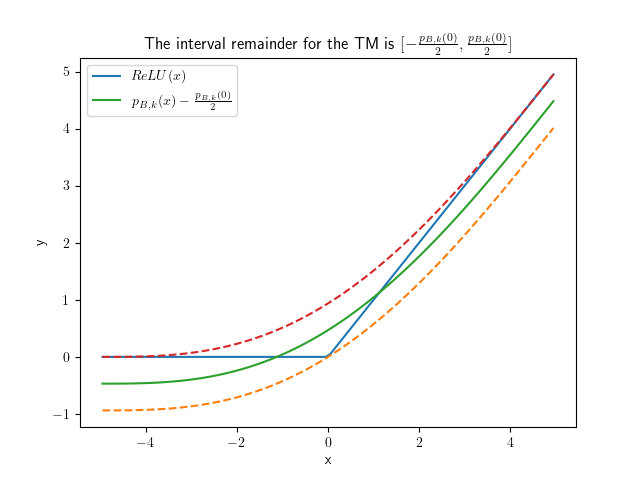}
  \end{center}
  \caption{
  The Taylor model (TM) over-approximation $p(x)+I$ of $\text{ReLU}(x)$ is given by $p(x) = p_{B,k}(x) - \frac{p_{B,k}(0)}{2}$ and $I = [-\frac{p_{B,k}(0)}{2}, \frac{p_{B,k}(0)}{2}])$ where $p_{B,k}(0)$ is the Bernstein polynomial $p_{B,k}(x)$ evaluated at $x=0$. It can be shown that for $x \in [a, b]$ with $a < 0 < b$, the bounds of the interval remainder $I$ are tight for any order $k$ Bernstein polynomials approximation with $k \geq 1$.
  }
  \label{fig:bp_relu_ub_lb}
\end{figure}

\subsection{Using Symbolic Remainders}
A main source of overestimation in interval arithmetic is the computation of linear mappings. Given a box (Cartesian product of single intervals) $I$, its image under a linear mapping $\vx \mapsto A \vx$ is often not a box but has to be over-approximated by a box in interval arithmetic. For a sequence of linear mappings, the resulting box is often unnecessarily large due to the overestimation accumulated in each mapping. It is also known as the \emph{wrapping effect}~\cite{Jaulin+/2001/applied_interval_analysis}. To avoid this class of overestimation, we may symbolically represent the intermediate boxes, and only do an interval evaluation at last. For example, if we need to compute the image of the box $I$ through the linear mappings: $\vx \mapsto A_1 \vx, \dots, \vx \mapsto A_m \vx$, the box $I$ is kept symbolically and the composite mapping is computed as $A' = A_m\cdots A_1$, a tight interval enclosure for the image can be obtained from evaluating $A'I$ using interval arithmetic.

Although TM arithmetic uses polynomials to symbolically represent the variable dependencies, it is not free from wrapping effects since the remainders are always computed using interval arithmetic. Consider the TM composition for computing the output TM of a single layer in  Fig.~\ref{fig:single_layer_composition},
the output TM $p_{\sigma,i}(W_i (p_i(\vx_0) + I_i) + B_i) + I_{\sigma,i}$ equals to $Q_i W_i p_i(\vx_0) + Q_i W_i I_i + Q_i B_i + p_{\sigma,i}^R(W_i (p_i(\vx_0) + I_i) + B_i) + I_{\sigma,i}$ such that $Q_i$ is the matrix of the linear coefficients in $p_{\sigma,i}$, and $p_{\sigma,i}^R$ consists of the terms in $p_{\sigma,i}$ of the degrees $\neq 1$. Therefore, the remainder $I_i$ in the second term can be kept symbolically such that we do not compute $Q_i W_i I_i$ out as an interval but keep its transformation matrix $Q_i W_i$ to the subsequent layers. Given the image $S$ of an interval under a linear mapping, we use $\underline{S}$ to denote that it is kept symbolically, i.e., we keep the interval along with the transformation matrix, and $\overline{S}$ to denote that the image is evaluated as an interval.

Next, we present the use of SR in layer-by-layer propagation. Starting from the NN input TM $(p_1(\vx_0),I_1)$, the output TM of the first layer is computed as
\[
 \begin{aligned}
  & \underbrace{Q_1 W_1 p_1(\vx_0) + Q_1 B_1 + p_{\sigma,1}^R(W_1 (p_1(\vx_0) + I_1) + B_1) + I_{\sigma,1}}_{q_1(\vx_0) + J_1} \\
  & + \underline{Q_1 W_1 I_1}
 \end{aligned}
\]
which can be kept in the form of $q_1(\vx_0) + J_1 + \underline{Q_1 W_1 I_1}$. Using it as the input TM of the second layer, we have the following TM
\[
 \scriptsize
 \begin{aligned}
   & p_{\sigma,2}(W_2 (q_1(\vx_0) + J_1 + \underline{Q_1 W_1 I_1}) + B_2) + I_{\sigma,2} \\
   = & \underbrace{Q_2 W_2 q_1(\vx_0) + Q_2 B_2 + p_{\sigma,2}^R(W_2 (q_1(\vx_0) + J_1 + \overline{Q_1 W_1 I_1}) + B_2) + I_{\sigma,2}}_{q_2(\vx_0) + J_2} \\
   & + \underline{Q_2 W_2 J_1} + \underline{Q_2 W_2 Q_1 W_1 I_1}
 \end{aligned}
\]
for the output range of the second layer. Therefore the output TM of the $i$-th layer can be obtained as $q_i(\vx_0) + \mathbb{J}_i + \underline{Q_iW_i\cdots Q_1W_1 I_1}$ such that $\mathbb{J}_i = J_i + \underline{Q_iW_iJ_{i-1}} + \underline{Q_iW_iQ_{i-1}W_{i-1}J_{i-2}} + \cdots + \underline{Q_iW_i\cdots Q_2W_2 J_1}$.

We present the SR method in Algorithm~\ref{algo:sym_rem} where we use two lists: $\mathcal{Q}[j]$ for $Q_iW_i\cdot\cdots\cdot Q_jW_j$ and $\mathcal{J}[j]$ for $\mathbb{J}_j$ to keep the intervals and their linear transformations. The symbolic remainder representation is replaced by its interval enclosure $I_r$ at the end of the algorithm.

\noindent\textbf{Time and space complexity.}
Although Algorithm~\ref{algo:sym_rem} produces TMs with tighter remainders than Algorithm~\ref{algo:nn_output}, because of the symbolic interval representations under linear mappings, it requires (1) two extra arrays to keep the intermediate matrices and remainder intervals, (2) two extra inner loops which perform $i-1$ and $i-2$ iterations in the $i$-th outer iteration. The size of $Q_iW_i\cdot\cdots\cdot Q_jW_j$ is determined by the rows in $Q_i$ and the columns in $W_j$, and hence the maximum number of neurons in a layer determines the maximum size of the matrices in $\mathcal{Q}$. Similarly, the maximum dimension of $J_i$ is also bounded by the maximum number of neurons in a layer. Because of the two inner loops, time complexity of Algorithm~\ref{algo:sym_rem} is quadratic in $M$, whereas Algorithm~\ref{algo:nn_output} is linear in $M$.

\begin{algorithm}[t]
 \caption{TM output computation using symbolic remainders, input and output are the same as those in Algorithm~\ref{algo:nn_output}}\label{algo:sym_rem}
 \begin{algorithmic}[1]
  \STATE Setting $\mathcal{Q}$ as an empty array which can keep $M+1$ matrices;
  \STATE Setting $\mathcal{J}$ as an empty array which can keep $M+1$ multidimensional intervals;
  \STATE $\mathbb{J} \leftarrow 0$;
  \FOR{$i=1$ to $M+1$}
   \STATE Computing the composite function $p_{\sigma,i}$ and the remainder interval $I_{\sigma,i}$ using the BP technique;
   \STATE Evaluating $q_i(\vx_0) + J_i$ based on $\mathbb{J}$ and $\mathcal{Q}[1]I_1$; 
   \STATE $\mathbb{J} \leftarrow J_i$;
   \STATE $\Phi_i = Q_i W_i$;
   \FOR{$j=1$ to $i-1$}
    \STATE $ \mathcal{Q}[j] \leftarrow \Phi_i \cdot \mathcal{Q}[j]$;
   \ENDFOR
   \STATE Adding $\Phi_i$ to $\mathcal{Q}$ as the last element;
   \FOR{$j=2$ to $i$}
    \STATE $\mathbb{J} \leftarrow \mathbb{J} + \mathcal{Q}[j] \cdot \mathcal{J}[j-1]$;
   \ENDFOR
   \STATE Adding $J_i$ to $\mathcal{J}$ as the last element;
  \ENDFOR
  \STATE Computing an interval enclosure $I_r$ for $\mathbb{J} + \mathcal{Q}[1]I_1$; 
  \RETURN $q_{M+1}(\vx_0) + I_r$.
 \end{algorithmic}
\end{algorithm}

\begin{theorem}\label{thm:main}
 In Algorithm~\ref{algo:polar_express}, if $(p(\vx_0,\tau), I)$ is the $i$-th TM flowpipe computed in the $j$-st control step, then for any initial state $\vc\in X_0$, the box $p(\vc,\tau) + I = p'(\tau) + I$ contains the actual reachable state $\varphi_\mathcal{N}(\vc,(j-1)\delta_c + (i-1)\delta + \tau)$ for all $\tau\in [0,\delta]$.
\end{theorem}

\section{Benchmark Evaluations}\label{sec:benchmarks}



We conduct a comprehensive comparison with state-of-the-art tools across a diverse set of benchmarks. In addition, we discuss in detail the applicability and comparative advantages of different techniques. The experiments were performed on a machine with a 6-core 2.20 GHz Intel i7 CPU and 16GB of RAM.  For tools that can leverage GPU acceleration such as $\alpha,\beta$-CROWN, the experiments were run with the aid of an Nvidia GeForce GTX 1050Ti GPU. \zhou{The multi-threading support is realized by using the C++ Standard Library. Considering the overhead introduced by multi-threading, we also measure the performance of the application under single thread to identify the bottleneck caused by multi-threading.}

\noindent\textbf{Benchmarks.}
Our NNCS benchmark suite consists of Benchmarks 1 - 6 from~\cite{Dutta_Others__2019__Reachability,huang2019reachnn}, Discrete Mountain Car (MC) from~\cite{ivanov2018verisig}, Adaptive Cruise Control (ACC) from~\cite{tran2020nnv}, 2D Spacecraft Docking from~\cite{ravaioli2022safe}, Attitude Control from~\cite{huang2022polar}, Quadrotor-MPC from~\cite{ivanov2018verisig} and QUAD20 from~\cite{huang2022polar}. The performance of a tool is evaluated based on proving the reachability of tasks defined for the benchmarks. They are defined as follows.

\noindent\textit{Benchmark 1-6.}
The reachability verification task asks if the NNCS can reach the target set from any initial state in $N$ control steps. We give the definitions for the initial and target sets in Table~\ref{tab:benchmarks1_6}.


\begin{table}[t]
    \centering
    \caption{Reachability verification tasks for Benchmark 1-6.}
    \label{tab:benchmarks1_6}
    \begin{adjustbox}{width=.98\columnwidth}
        \begin{tabular}{@{}c|c|c|c@{}}
            \toprule
            Benchmark & Initial set & Target set\tablefootnote{The target set is defined on the first two dimensions by default.} & $N$ \\
            \midrule
            \midrule
            1 & $[0.8, 0.9]\times [0.5, 0.6]$ & $[0, 0.2] \times [0.05, 0.3]$ & 35  \\
            \midrule
            2 & $[0.7, 0.9] \times [0.7, 0.9]$  & $[-0.3, 0.1] \times [-0.35, 0.5]$ & 10 \\
            \midrule
            3 & $[0.8, 0.9] \times [0.4, 0.5]$ & $[0.2, 0.3] \times [-0.3, -0.05]$ & 60 \\
            \midrule
            4 & \stackanchor{$[0.25, 2.27] \times [0.08, 0.1] $} {$ \times [0.25, 0.27]$} & $[-0.2, -0.1] \times [0, 0.05]$ & 10 \\
            \midrule
            5 & \stackanchor{$[0.38, 0.4] \times [0.45, 0.47]$} {$\times [0.25, 0.27]$}  & $[-0.43, -0.15] \times [0.05, 0.22]$ & 10 \\
            \midrule
            6 & \stackanchor{$[-0.77, -0.75]\times [-0.45, -0.43]$}{$\times [0.51, 0.54] \times [-0.3, -0.28]$} & $[-0.1, 0.2]\times [-0.9, -0.6]$ & 10 \\
            \bottomrule
        \end{tabular}
    \end{adjustbox}
\end{table}

\noindent\textit{Discrete-Time Mountain Car (MC).} It is a 2-dimensional NNCS describing an under-powered car driving up a steep hill. 
We consider the initial condition defined by $x_0 \in [-0.53, -0.5]$ and $x_1 = 0$. The target is $x_0 \ge 0.2$ and $x_1 \ge 0$ where the car reaches the top of the hill and is moving forward. The total control steps $N$ is 150.

\noindent\textit{Adaptive Cruise Control (ACC).} 
The benchmark models the moves of a lead vehicle and an ego vehicle. The NN controller tries to maintain a safe distance between them. We use the definition of the initial and target set given in~\cite{ivanov2020verifying}, and the number of control steps is $N=50$.


\noindent\textit{2D Spacecraft Docking (Example~\ref{exam:docking}).}
The initial set is defined by $x, y\in[24, 26]$, $v_x=v_y= -0.1378$, and $v_{safe}\in [0.2697, 0.2755]$ which is directly computed based on the ranges of $x, y$. The total control steps $N$ is $120$.
In this benchmark, we only verify the safety property.  That is, the NN controller should maintain $\sqrt{v_x^2 + v_y^2}\leq v_{safe}$ all the time. 

\noindent\textit{Attitude Control \& QUAD20.}
The reachability problems for the two benchmarks are the same as the ones given in~\cite{huang2022polar}.

\noindent\textit{Quadrotor-MPC.}
This benchmark is originally given in~\cite{ivanov2020verifying}. It consists of a quadrotor and a planner. The position of the quadrotor is indicated by the state variables $(p_x, p_y, p_z)$, while the velocity in the 3 dimensions is given by $(v_x, v_y, v_z)$. The velocity of the planner is $(b_x, b_y, b_z)$ which has a piecewise-constant definition: $(b_x, b_y, b_z) = (-0.25, 0.25, 0)$ for $t\in[0,2]$ (first 10 steps), $(b_x, b_y, b_z) = (-0.25, -0.25, 0)$ for $t\in[2,4]$, $(b_x, b_y, b_z) = (0, 0.25, 0)$ for $t\in[4,5]$, and $(b_x, b_y, b_z) = (0.25, -0.25, 0)$ for $t\in[5,6]$. The control input $\theta$, $\phi$, $\tau$ is determined by an NN ``bang-bang" controller, which is a classifier mapping system states to a finite set of control actions. The initial set is defined as $p_x - q_x \in [-0.05, -0.025]$, $p_y - q_y \in [-0.025, 0]$, and $p_z - q_z = v_x = v_y = v_z = 0$. The verification task asks to prove that all reachable states in 30 control steps should be in the safe set $-0.32 \leq p_x - q_x, p_y - q_y, p_z - q_z \leq 0.32$.

\noindent\textbf{Evaluation Metrics.} The tools are compared based on their performance on all of the benchmarks. Since tools are using different hyper-parameters, we tune their settings for each benchmark, try to make them produce similar reachable set over-approximations and only compare the time costs. For a tool that is not able to handle a benchmark, we present its result with the best setting that we can find.

\noindent\textbf{Stopping Criteria.}
We stop the run of a tool when the reachability problem is proved, or the tool raises an error or is terminated by the operating system due to a runtime system error such as being out of memory.

Hence, every test produces one of the following four results: \textbf{(Yes)} the reachability property is proved, \textbf{(No)} the reachability property is disproved, \textbf{(U)} the computed over-approximation is too large to prove or disprove the property with the best tool setting we can find, \textbf{(DNF)} a tool or system error is reported and the reachability computation fails.

\noindent\textbf{Experimental Results.}
Table~\ref{tab:experiments}, Fig.~\ref{fig:all_benchmarks}.  Because Quadrotor-MPC is a hybrid system, only POLAR-Express, Verisig 2.0, and $\alpha,\beta$-CROWN\,+\,Flow* are able to deal with it. It is verified to be safe by POLAR-Express with 13.1 seconds and by Verisig 2.0 with 961.4 seconds. The result is Unknown for $\alpha,\beta$-CROWN\,+\,Flow* after 10854 seconds. 




\begin{table*}[pt!]
\caption[Caption for LOF]{Verification results. \textit{Dim} indicates the dimensions of input for NN controllers in each benchmark. NN Architecture lists activation functions and network structures in each benchmark. For example, ReLU ($n \times k$) represents that the network has $n$ hidden layers and $k$ neurons per layer. Hidden layers and the output layer use ReLU as an activation function. ReLU\_tanh represents that hidden layers use ReLU and the output layer uses tanh. For each tool, we provide verification results and time in seconds. If a property could not be verified, it is marked as U (Unknown). If a tool crashed on a benchmark with an internal error, it is marked as DNF. mt is short for multi-threads (we used 12 threads) and st is short for single-thread. The runtime of POLAR-Express is decomposed into the running time of the propagation of TMs in the neural-network controller (multi-threading support) and separately the reachability computation for the dynamical system. For neural-network controller that does not have any ReLU activation function, the single-threaded version of POLAR-Express is exactly the same as our prior prototype POLAR~\cite{huang2022polar}.}\label{tab:experiments}
\begin{adjustbox}{width=2\columnwidth}
\begin{threeparttable}
\begin{tabular}{@{}c|c|c|c|c|c|c|c|c|c|c@{}}
\toprule
Benchmarks       & \textit{Dim}     & NN Architecture      & POLAR-Express (mt)        & POLAR-Express (st) & Verisig 2.0 (mt)  & $\alpha,\beta$-CROWN\,+\,Flow* & NNV  & JuliaReach     & CORA    & RINO \\ \midrule \midrule
\multirow{4}{*}{Benchmark 1} & \multirow{4}{*}{2} & ReLU (2 x 20)     & Yes (0.54 + 0.08) & \textbf{Yes (0.09 + 0.08)}  & --         & U (230.1)   & DNF       & Yes (19.7)  & Yes (6.7)   & --\\
                             &                    & sigmoid (2 x 20)  & \textbf{Yes (0.77 + 0.09) }  & Yes (1.0 + 0.09) & Yes (26.6) & U (506.8)   & U (14.0)  & Yes (16.9)  & Yes (6.7)  & Yes (1.0)\\
                             &                    & tanh (2 x 20)     & \textbf{Yes (1.44 + 0.18)}   & Yes (2.6 + 0.14) & Yes (62.1) & U (299.3)   & U (18.6)  & Yes (17.2)  & Yes (3.2)  & Yes (1.9)\\
                             &                  & ReLU\_tanh (2 x 20) & Yes (0.55 + 0.08)  &\textbf{ Yes (0.1 + 0.07)}  & --         & U (293.5)   & U (18.0)  & Yes (17.2)  & Yes (4.6)  & --\\ \midrule
\multirow{4}{*}{Benchmark 2} & \multirow{4}{*}{2} & ReLU (2 x 20)     & Yes (0.13 + 0.02)  &\textbf{ Yes (0.02 + 0.01)} & --         & U (112.9)    & U (13.6)  & U (18.8)    & Yes (4.0)  & --\\
                             &                    & sigmoid (2 x 20)  & Yes (0.37 + 0.17)  & Yes (0.6 + 0.15)           & Yes (6.5)  & U (134.5)    & U (6.8)   & U (18.9)    & Yes (1.1)   & \textbf{Yes (0.3)}\\
                             &                    & tanh (2 x 20)     & \textbf{Yes (1.3 + 0.5)} & Yes (2.7 + 0.5)   & U (4.7)    & U (161.1)    & U (5.9)   & U (18.9)    & DNF        & DNF\\
                             &                    & ReLU\_tanh (2 x 20)&Yes (0.26 + 0.5)  & \textbf{Yes (0.1 + 0.4)}  & --         & U (85.5)    & U (17.1)  & U (18.9)    & Yes (1.0)  & --\\ \midrule
\multirow{4}{*}{Benchmark 3} & \multirow{4}{*}{2} & ReLU (2 x 20)     & Yes (1.2 + 0.85)  & \textbf{Yes (0.4 + 0.75)}  & --         & U (217.8)   & U (21.0)  & Yes (17.6)  & Yes (4.5)  & --\\
                             &                    & sigmoid (2 x 20)  & Yes (3.3 + 0.8)  & Yes (7.1 + 0.8)           & Yes (38.2) & U (353.0)   & U (26.8)  & Yes (17.5)  & \textbf{Yes (2.6)}  & Yes (2.8) \\
                             &                    & tanh (2 x 20)     & Yes (3.2 + 0.8)  & Yes (6.2 + 0.7)           & Yes (31.9) & U (424.6)   & U (27.1)  & Yes (17.7)  & \textbf{Yes (2.4)}  & Yes (6.5)\\
                             &                    & ReLU\_sigmoid (2 x 20)&No (1.4 + 0.8)&\textbf{No (0.6 + 0.7)}    & --         & U (141.2)    & U (20.6)  & No (17.7)   & No (2.2)   & --\\ \midrule
\multirow{4}{*}{Benchmark 4} & \multirow{4}{*}{3} & ReLU (2 x 20)     & Yes (0.15 + 0.02)  & \textbf{Yes (0.03 + 0.02)} & --         & U (75.4)    & DNF     & Yes (16.6)  & Yes (3.0)   & --\\
                             &                    & sigmoid (2 x 20)& No (0.24 + 0.02)    &\textbf{No (0.17 + 0.02)}& No (5.7)   & No (139.5)    & DNF     & No (16.7)   & No (0.9)    & No (0.3)\\
                             &                    & tanh (2 x 20)     & No (0.23 + 0.02)   & No (0.17 + 0.02)            & No (4.6)   & No (160.9)    & DNF     & No (16.9)   & No (0.9)    & \textbf{No (0.1)}\\
                             &                    & ReLU\_tanh (2 x 20) & Yes (0.14 + 0.02)&\textbf{Yes (0.03 + 0.02)}  & --         & U (89.6)    & U (5.4)   & Yes (16.9)  & Yes (1.0)   & --\\ \midrule
\multirow{4}{*}{Benchmark 5} & \multirow{4}{*}{3} & ReLU (3 x 100)    & Yes (2.4 + 0.02)  & \textbf{Yes (1.8 + 0.02)}  & --         & U (111.4)    & DNF     & U (18.6)    & Yes (3.4)  &--\\
                             &                    & sigmoid (3 x 100) & No (4.2 + 0.02)   & \textbf{No (3.7 + 0.02)}   & No (93.1)  & U (229.3)    & DNF     & U (19.0)    & U (1.3)     & U (7.2)\\
                             &                    & tanh (3 x 100)    & Yes (4.2 + 0.02)  &  Yes (3.8 + 0.02)  & Yes (74.2) & U (263.0)   & U (125.0) & U (18.7)    & \textbf{Yes (1.2) } & Yes (2.7)\\
                             &                    & ReLU\_tanh (3 x 100)& Yes (2.5 + 0.02)&  Yes (2.1 + 0.02)  & --         & U (96.1)    & U (4.1)   & U (18.8)    & \textbf{Yes (1.7)}  &--\\ \midrule
\multirow{4}{*}{Benchmark 6} & \multirow{4}{*}{4} & ReLU (3 x 20)     & Yes (0.2 + 0.05)  & \textbf{Yes (0.05 + 0.05)}  & --         & U (149.8)    & U (8.8)   & Yes (19.6)  & Yes (8.6)   &--\\
                             &                    & sigmoid (3 x 20)  & Yes \textbf{(0.47 + 0.05)}  & Yes (0.7 + 0.05)           & Yes (24.3) & U (217.5)    & U (9.9)   & Yes (19.5)  & Yes (2.7)   & Yes (0.6) \\
                             &                    & tanh (3 x 20)     & Yes (0.5 + 0.05)  & Yes (0.7 + 0.05)           & Yes (18.8) & U (259.1)   & U (9.1)   & Yes (20.4)  & Yes (3.7)    & \textbf{Yes (0.5)} \\
                             &                    & ReLU\_tanh (3 x 20) & Yes (0.2 + 0.05)&\textbf{Yes (0.05 + 0.05)}   & --         & U (109.3)    & U (5.9)   & Yes (19.6)  & Yes (5.1)   &--\\ \midrule
Discrete Mountain Car        & 2                  & sigmoid\_tanh (2 x 16)&\textbf{Yes (3.0 + 0.12)}& Yes (4.0 + 0.14) & Yes (70.3) & U (450.0)   & U (18.1)  & --         &    --        & DNF \\ \midrule
ACC                          & 6                  & tanh (3 x 20)    & \textbf{Yes (2.6 + 0.15)} & Yes (4.4 + 0.14)    & Yes (1980.2) & U (1806.4)  & U (39.6)  & U (29.9)   & Yes (3.7)    & Yes (10.1)\\ \midrule
2D Spacecraft Docking        & 6                  & tanh (2 x 64)    & Yes (15.8 + 29.7) & Yes (25.5 + 29.8)           & U (3331.0)   & U (3119.7)  & U (74.0)  & Yes (37.3) & \textbf{Yes (33.5)} & DNF \\ \midrule
Attitude Control             & 6                  & sigmoid (3 x 64) & Yes (8.0 + 0.6)  & Yes (10.1 + 0.5)& U (1271.5)   & U (582.6)  & U (358.2)  & Yes (19.9) & \textbf{Yes (1.9)}  & Yes (121.3) \\ \midrule
QUAD20                       &   12               &  sigmoid (3 x 64)& \textbf{Yes (316.8 + 212.5)} & Yes (809.4 + 197.6)& U (17939.7, 1step)    & U (370.2, 3step)    & U (325.5, 1step)  & U (60.6, 25step)    & U (2219, soundness)      & DNF\\ \bottomrule
\end{tabular}
\end{threeparttable}
\end{adjustbox}
\end{table*}

\begin{figure*}[hbp!]
\captionsetup[subfloat]{farskip=2pt,captionskip=1pt}
	\centering	
	\subfloat[][]{%
		\includegraphics[width=0.33\textwidth]{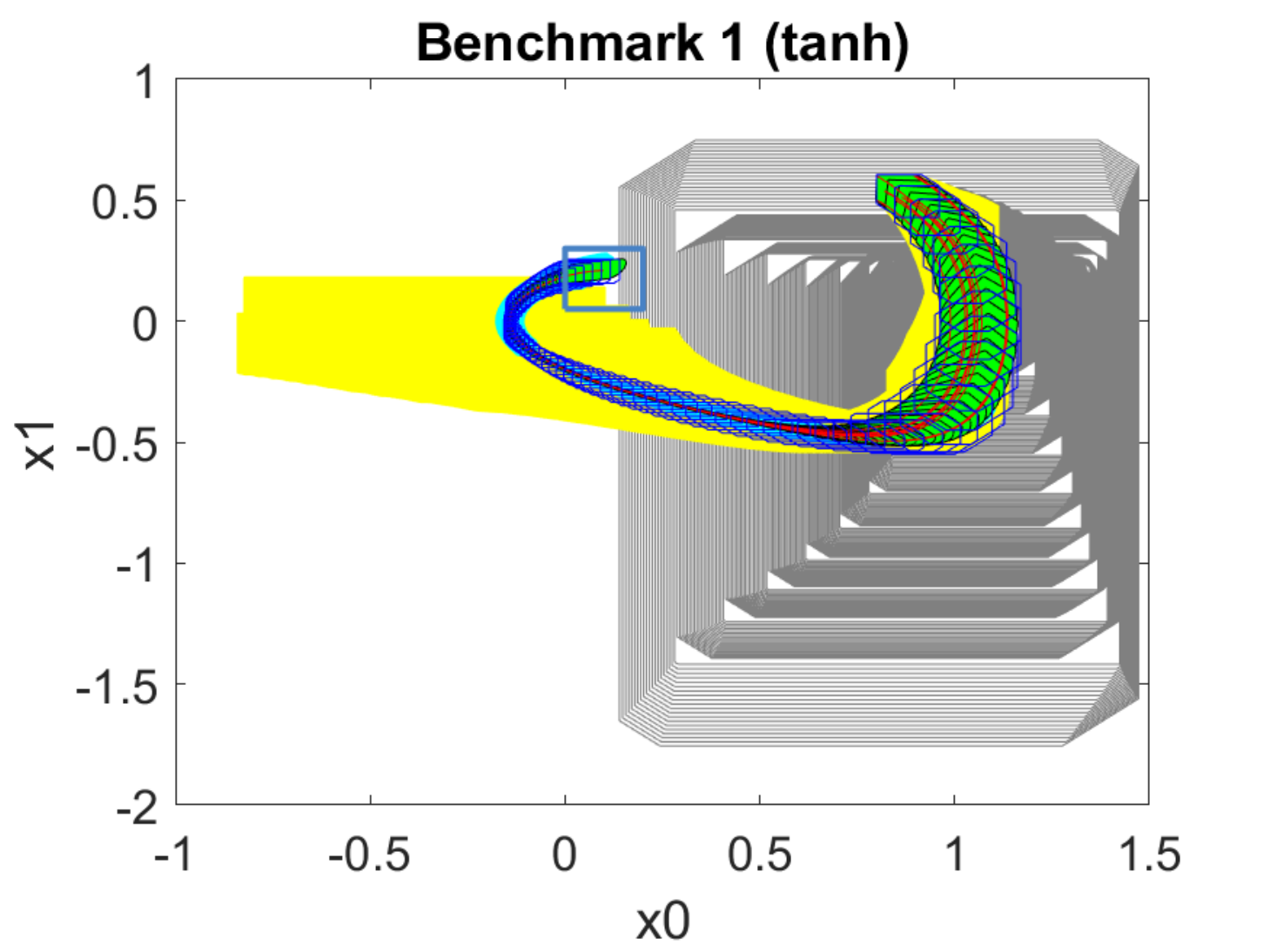}%
	} 
	\subfloat[][]{%
		\includegraphics[width=0.33\textwidth]{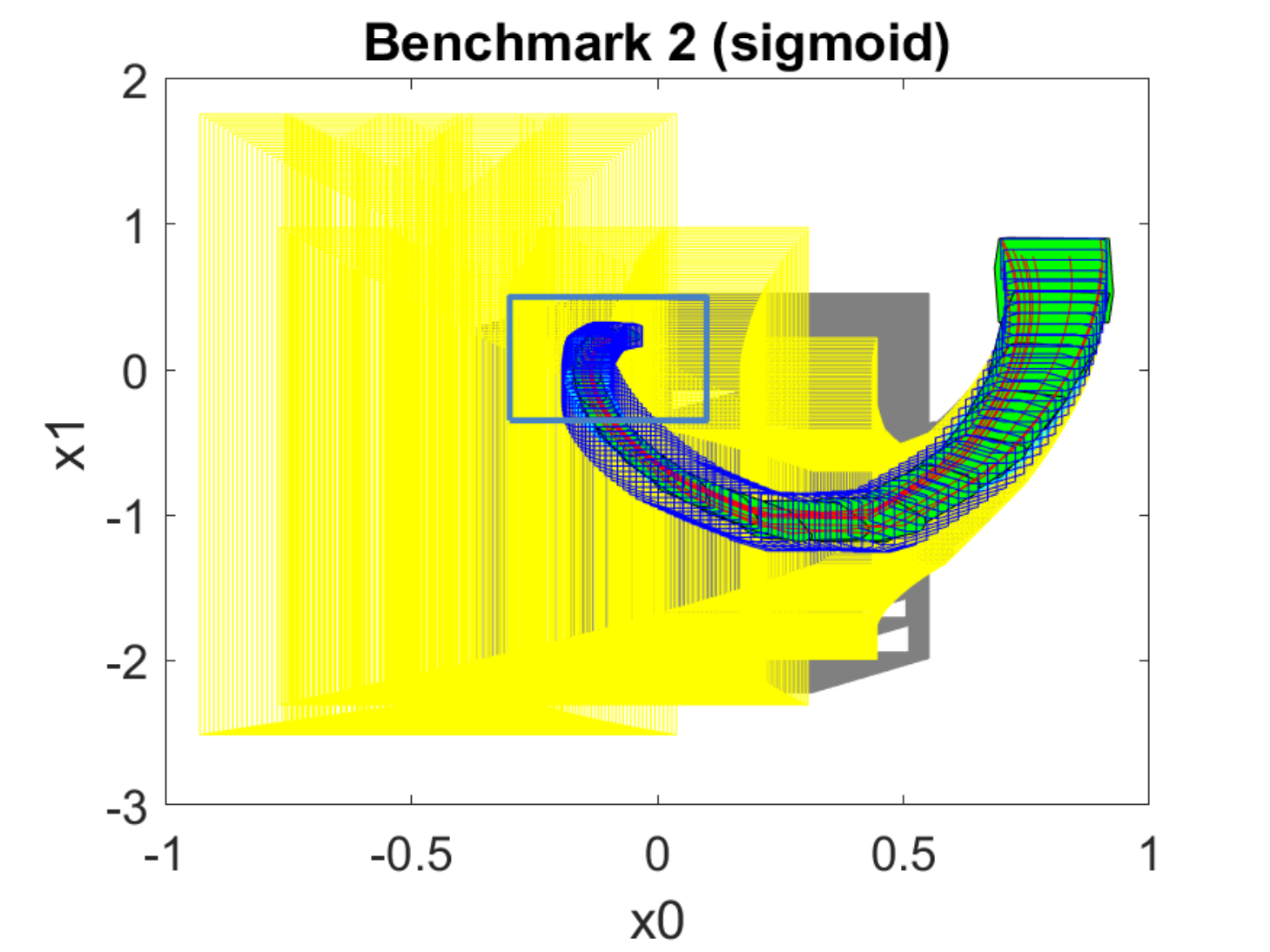}%
	} 
	\subfloat[][]{%
		\includegraphics[width=0.33\textwidth]{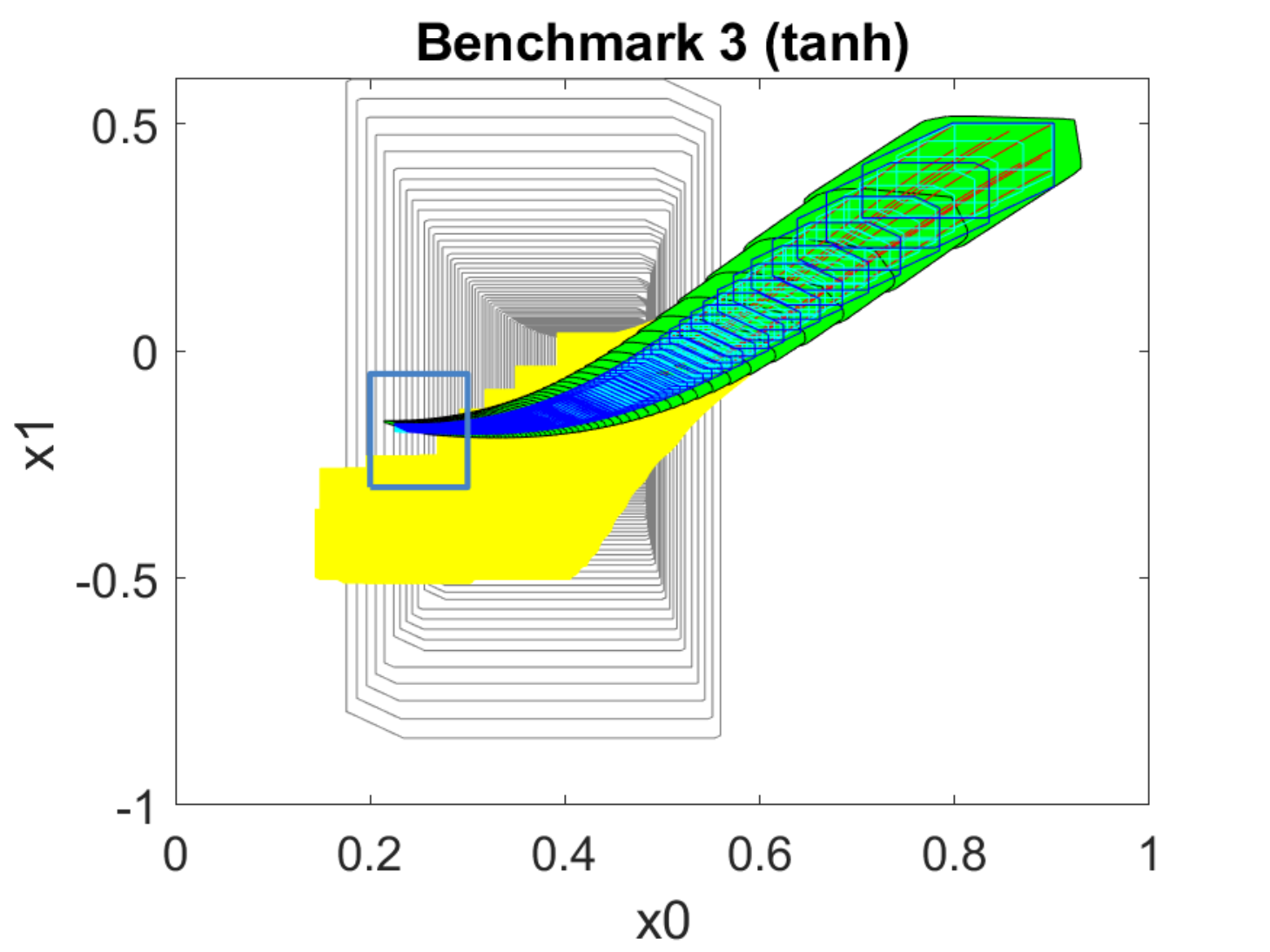}%
	} \
	\subfloat[][]{%
		\includegraphics[width=0.33\textwidth]{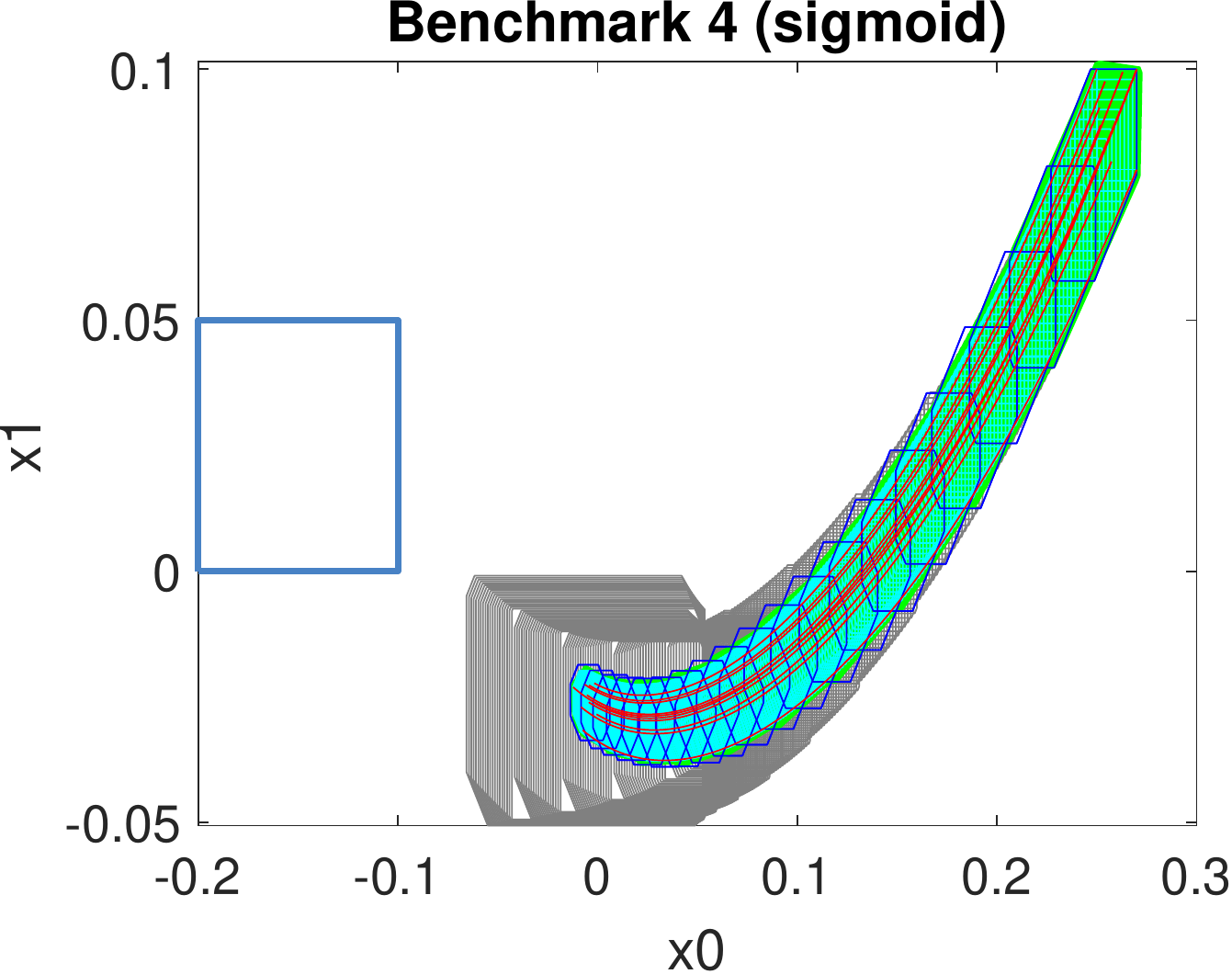}%
	}
	\subfloat[][]{%
		\includegraphics[width=0.33\textwidth]{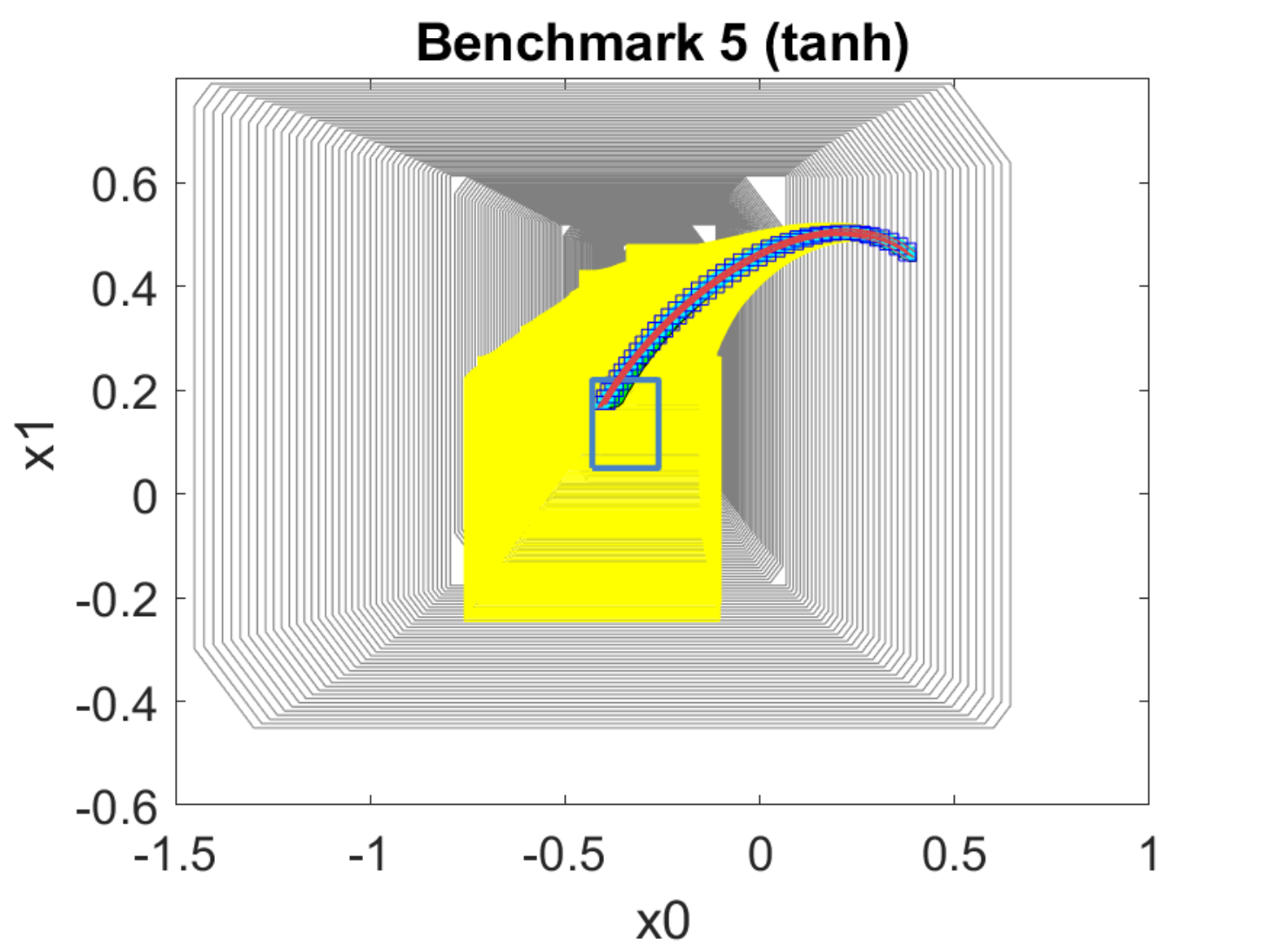}%
	}
	\subfloat[][]{%
		\includegraphics[width=0.33\textwidth]{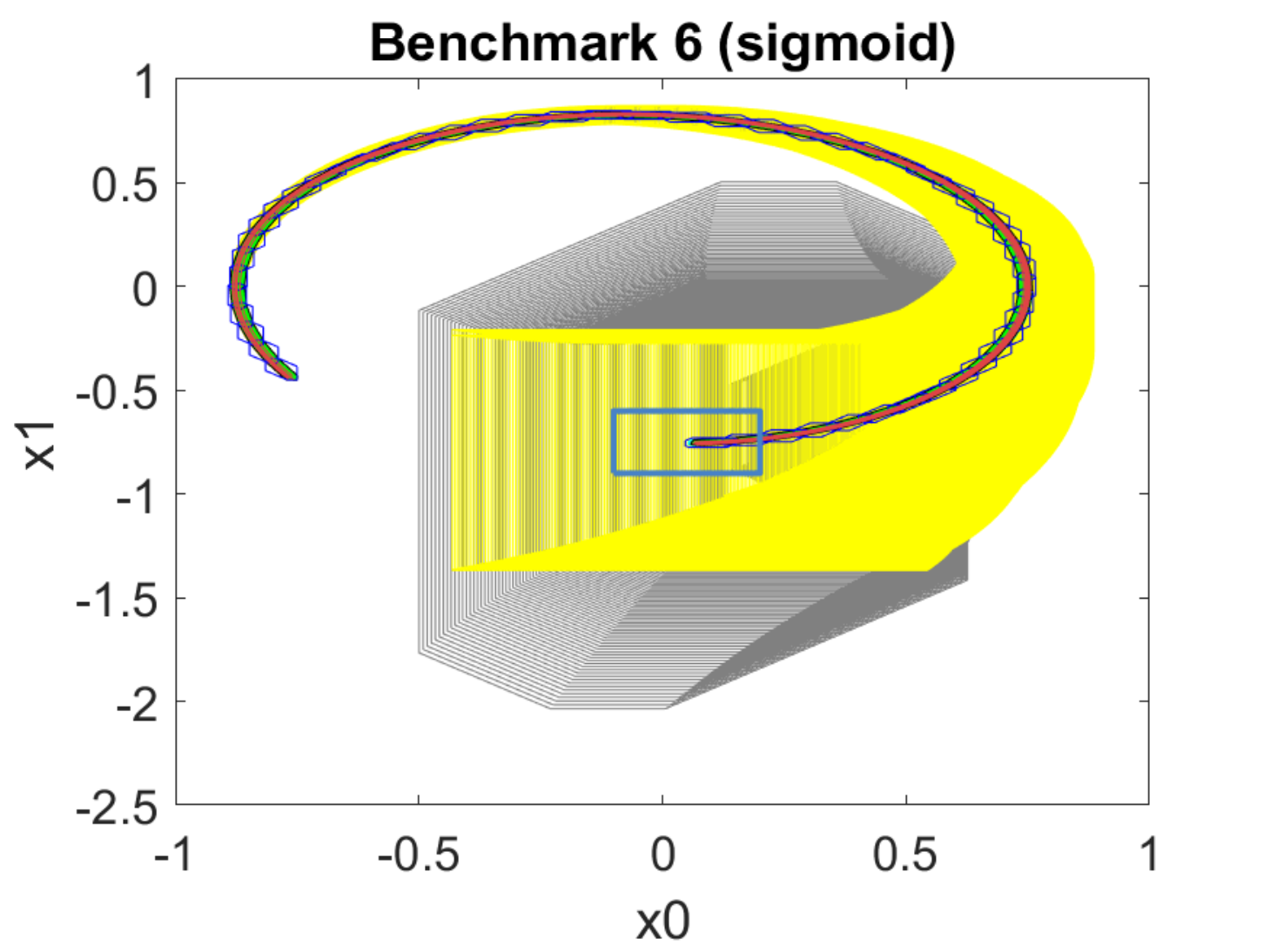}%
	} \ 
	\subfloat[][]{%
		\includegraphics[width=0.33\textwidth]{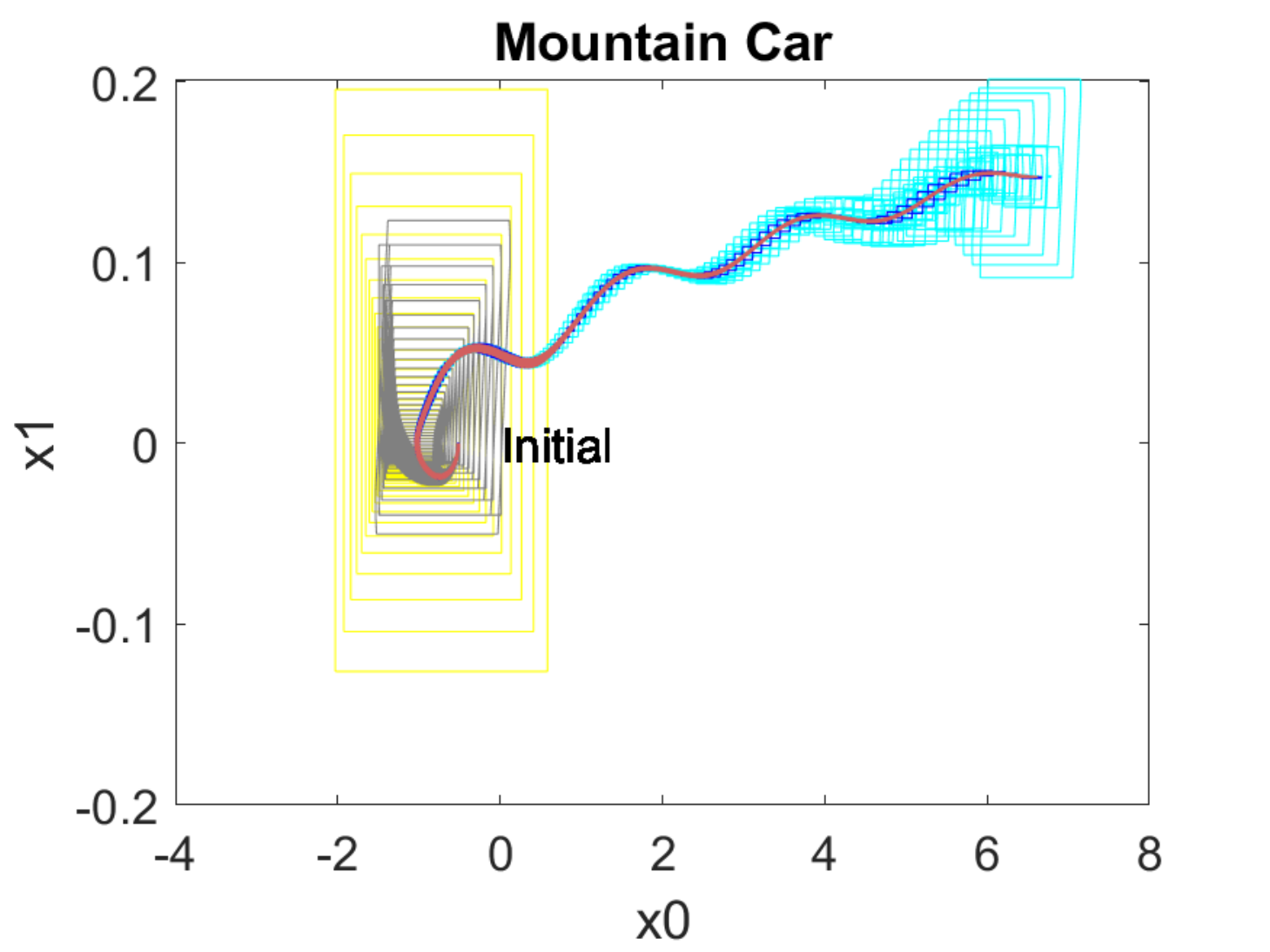}%
	} 
 	\subfloat[][]{%
		\includegraphics[width=0.33\textwidth]{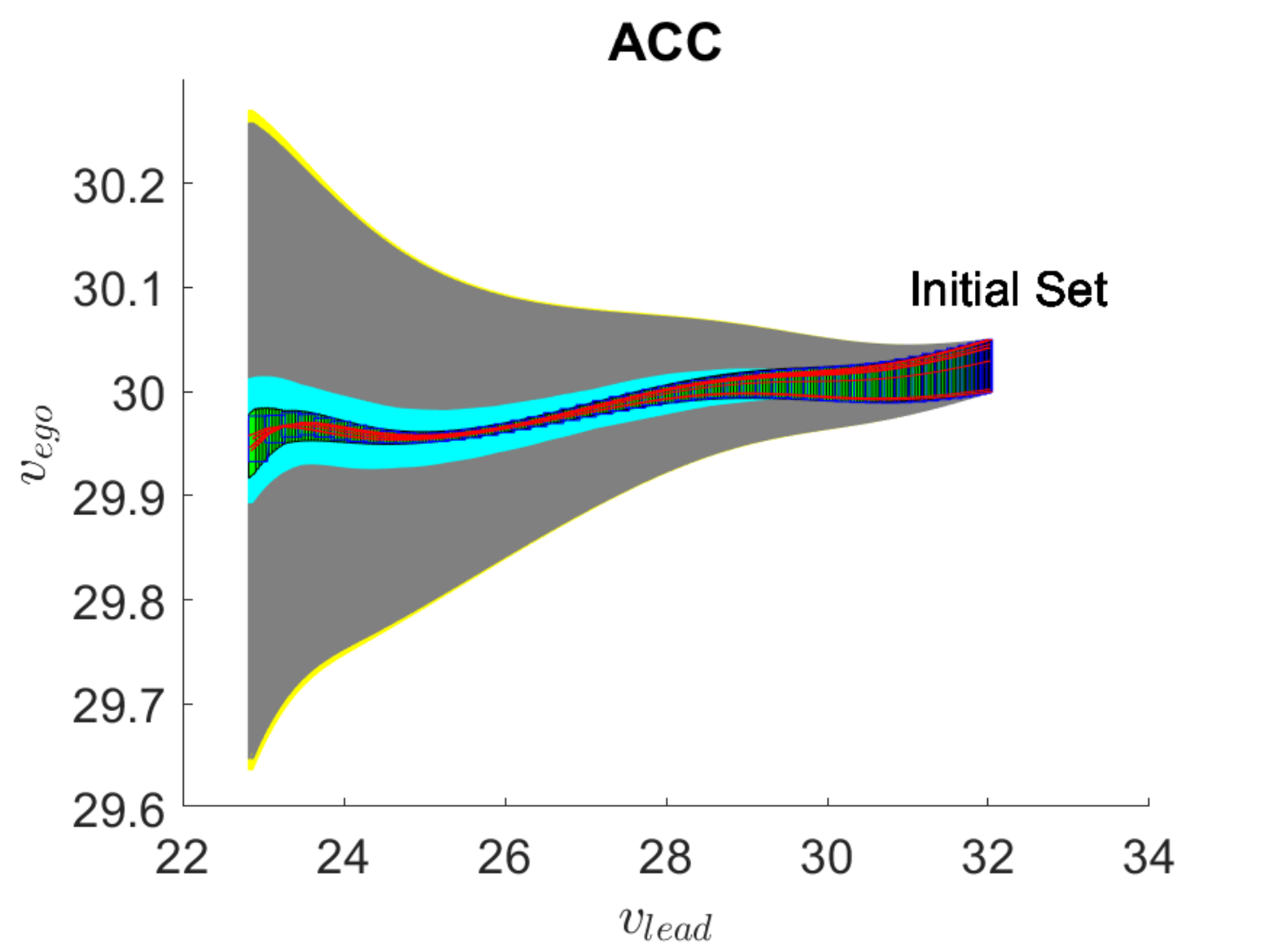}%
	}
	\subfloat[][]{%
		\includegraphics[width=0.33\textwidth]{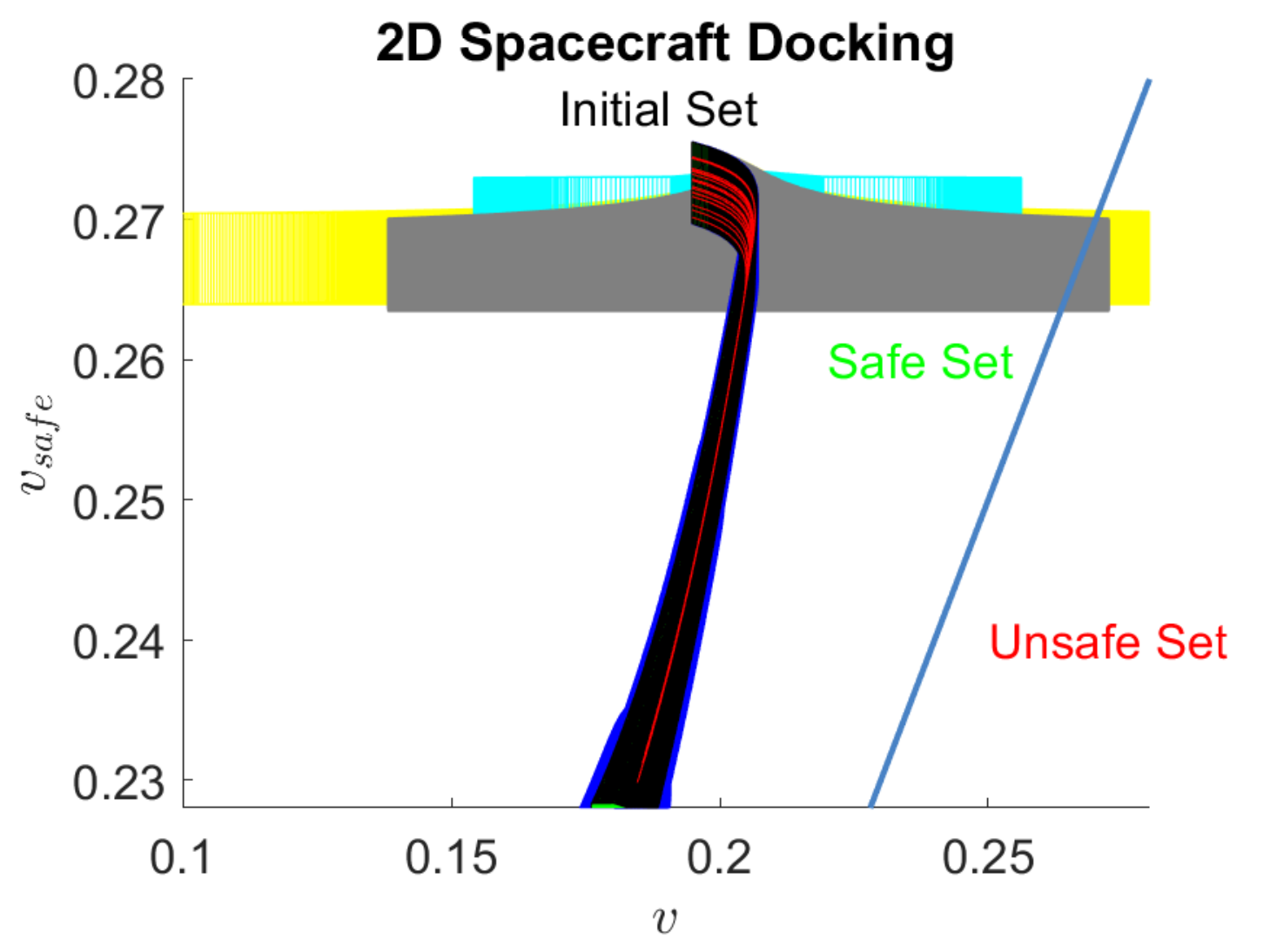}%
	} \ 
 	\subfloat[][]{%
		\includegraphics[width=0.33\textwidth]{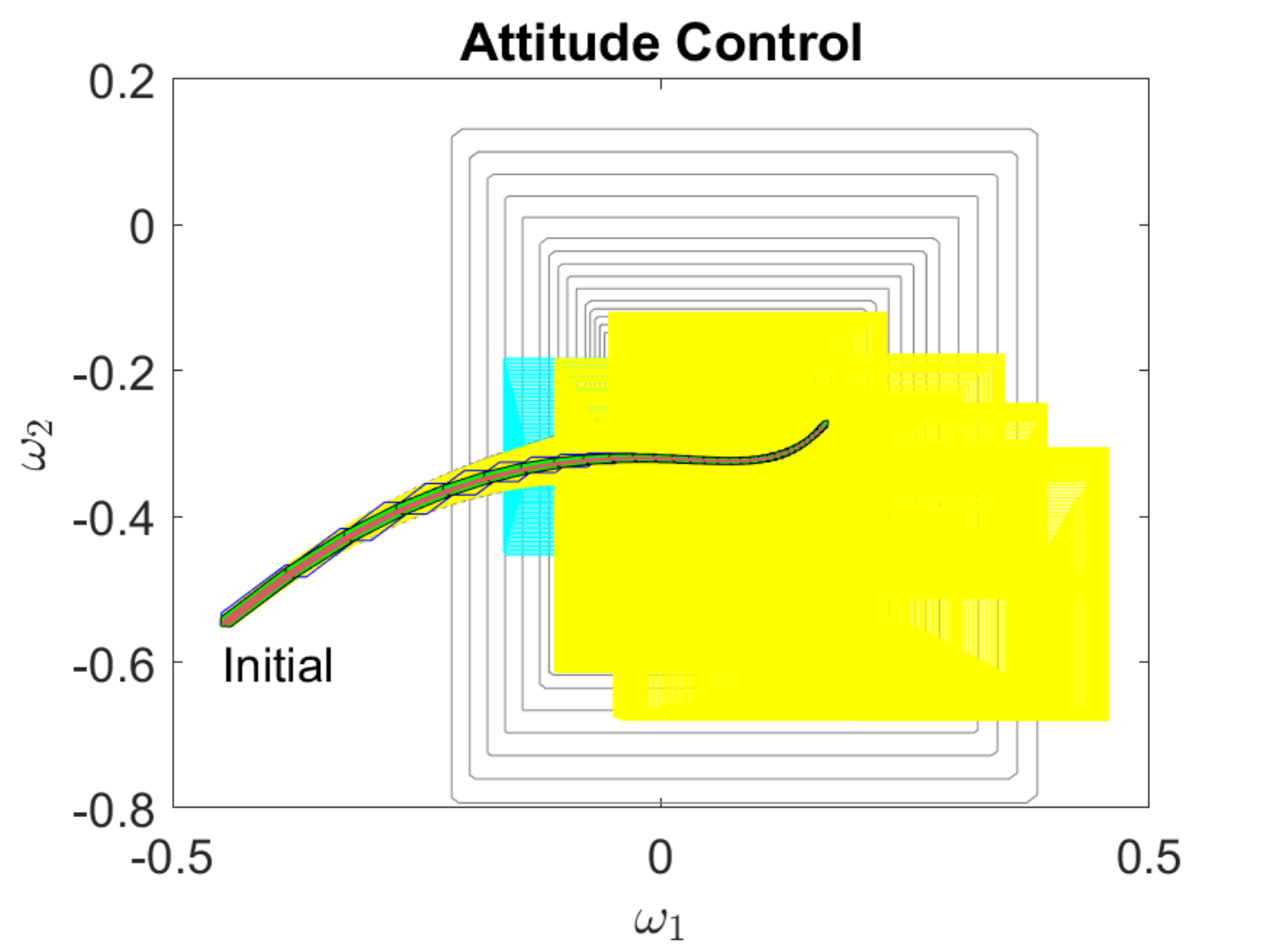}%
	}
 	\subfloat[][]{%
		\includegraphics[width=0.33\textwidth]{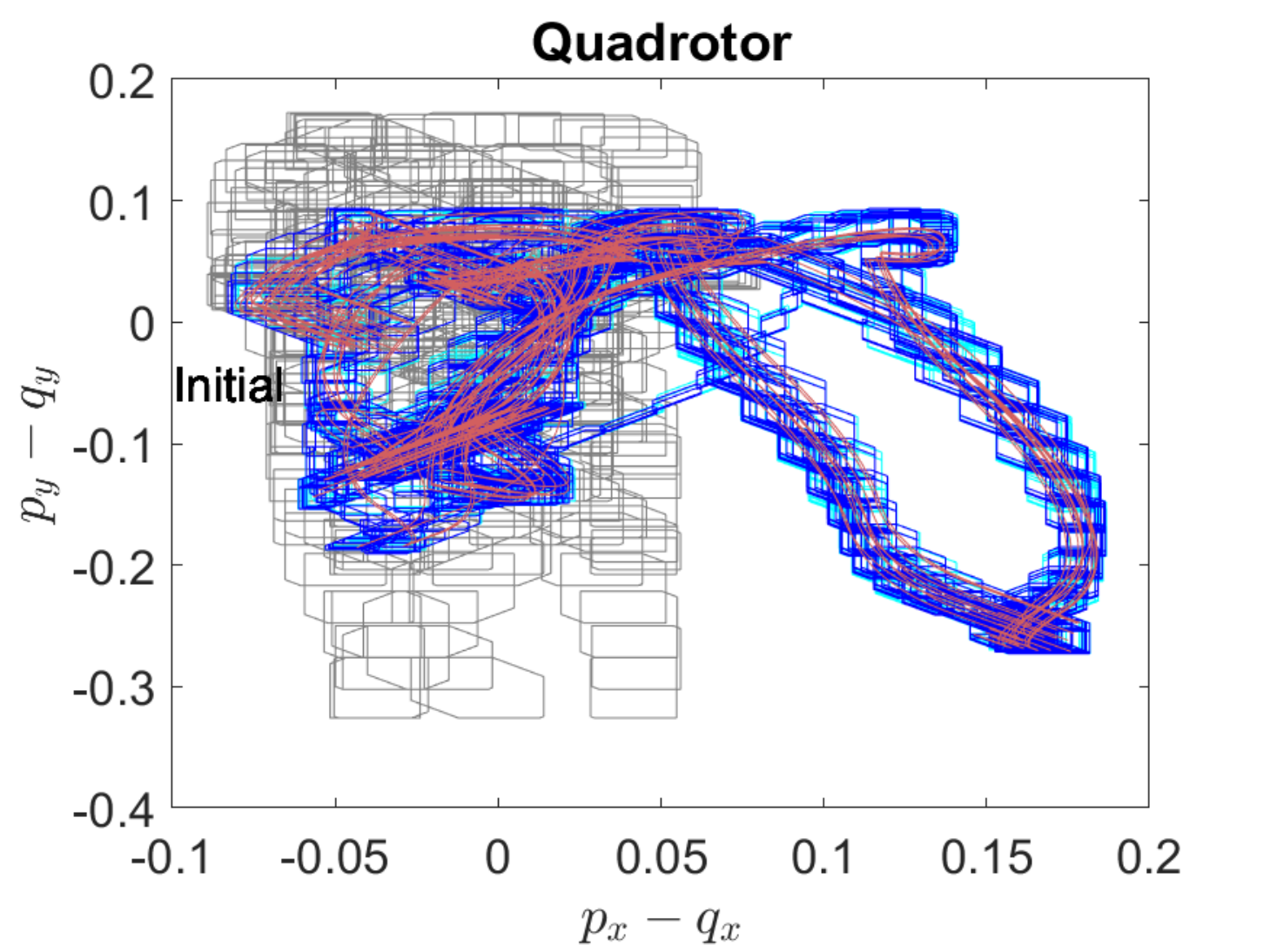}%
	}
 	\subfloat[][]{%
		\includegraphics[width=0.33\textwidth]{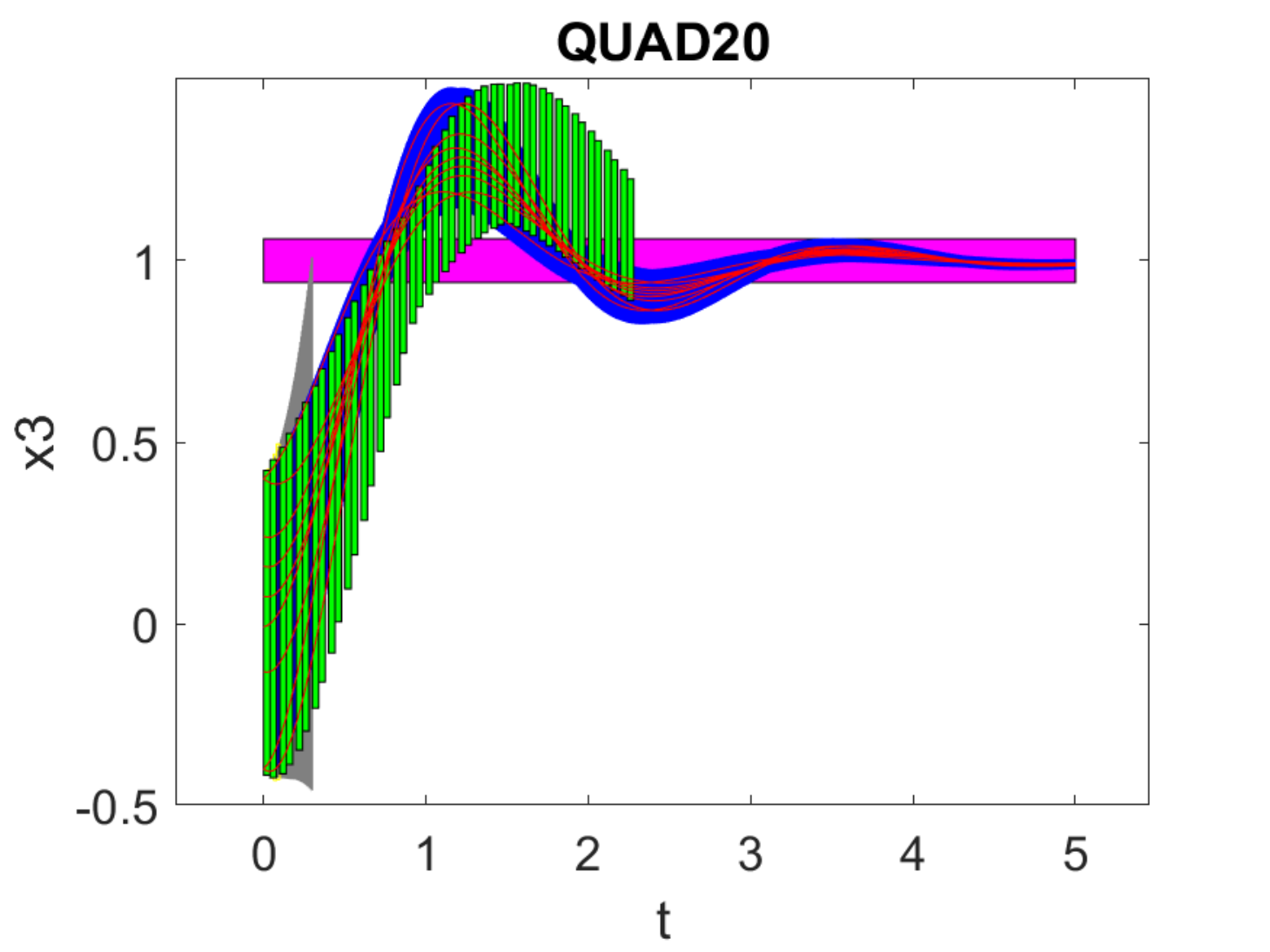}%
	}
	\caption{Computed reachable sets of all examples by different tools.  POLAR-Express (blue sets), Verisig 2.0 (cyan sets), $\alpha,\beta$-CROWN\,+\,Flow*  (grey sets), NNV (yellow sets), CORA (green sets) and simulation traces (red curves), JuliaReach and RINO have their own plotting support and are not shown in this paper. }
	\label{fig:all_benchmarks}
\end{figure*}

\subsection{Challenges with Running Other Tools}
We found soundness issues when running the CORA tool for Benchmark 5 and  the QUAD20 example. In Benchmark 5, both simulation traces and reachable sets of CORA deviate from the other tools. In QUAD20, the reachable sets computed by CORA cannot cover the simulation traces (i.e. not an over-approximation), shown in Fig~\ref{fig:all_benchmarks} (l). 

The default setup in JuliaReach reports the runtime of running the \textit{same} reachable set computation the second time after a ``warm-up" run (whose runtime is not included), 
likely to take advantage of cache effects or saved computations. 
For a fair comparison with the other tools, we report the runtime of the first run.

RINO has three ``DNF"s caused by division-by-zero errors. Both RINO and JuliaReach have their own plot functions, while the other tools plot reachable sets in MATLAB. This is an issue in several examples such as the plot function in RINO taking too long to plot the reachable sets.

For the most complicated QUAD20 example, Verisig 2.0 failed after 17939 seconds during 1-step reachable set computation, $\alpha,\beta$-CROWN\,+\,Flow* failed after 3 control steps, NNV failed after 1 step, CORA has the soundness issue as mentioned earlier, the reachable set of JuliaReach explodes after 25 control steps, and RINO has the division-by-zero error.


\subsection{Experimental Comparison and Discussion}

According to the experimental results in Table~\ref{tab:experiments} and Fig~\ref{fig:all_benchmarks}, POLAR-Express can verify all the benchmarks and achieve the overall best performance in terms of the tightness of the reachable set computations and runtime efficiency. 
We can also observe that the multi-threading support for POLAR-Express helps to reduce runtime in higher dimensional systems. However, because of the overhead involved, it may take longer than the single-threaded version for some of the lower dimensional benchmarks.
\zhou{
Thus, we believe that multi-threading support will be more suitable for higher dimensional tasks.}

Verisig 2.0 can only handle neural-network controllers with differential activation functions (e.g., sigmoid and tanh). In the lower-dimensional benchmarks, the reachable sets computed by POLAR-Express and Verisig 2.0 almost overlap with each other. However, POLAR-Express takes much less time to compute the reachable sets in all cases. In the higher-dimensional benchmarks, compared with POLAR-Express, Verisig 2.0 either has much larger over-approximation errors (e.g., Figure~\ref{fig:all_benchmarks} (g) (h) (k)) or can only compute fewer steps of reachable sets (e.g., Figures~\ref{fig:all_benchmarks} (i) (j) (l)). 

NNV and $\alpha,\beta$-CROWN are pure (symbolic) bound estimation techniques. As explained in Section~\ref{sec:related} and can be observed in Fig~\ref{fig:all_benchmarks}, these methods can produce large over-approximation errors when used in reachable set computations, even if the bound estimations are relatively tight compared with function-over-approximation-based approaches. 
Due to the closed-loop nature of NNCSs, a large over-approximation will also slow down the reachable set computations for the subsequent control steps, as evident in the runtime of $\alpha,\beta$-CROWN$\,+\,$Flow* in Table~\ref{tab:experiments} even though $\alpha,\beta$-CROWN itself is quite efficient.

JuliaReach's runtimes are close across many benchmarks despite the substantial differences in the system dynamics and the neural networks in those benchmarks. After breaking down those runtimes, we notice that memory allocation always constitutes the largest portion of the runtimes which makes the difference of benchmarks less significant.
On the other hand, it fails to verify Benchmark 2 and Benchmark 5
due to large over-approximation errors. 
CORA can achieve similar runtime efficiency and over-approximation tightness as POLAR-Express in some benchmarks, but it suffers from soundness issues (manifested on Benchmark 5 and QUAD20) as noted earlier. 
RINO can only handle neural networks with differential activation functions such as sigmoid and tanh. It is quite efficient on the lower-dimensional systems but is slow for the higher-dimensional systems. In addition, the tool contains division-by-zero bugs that resulted in the reported DNF errors.

\smallskip
\noindent

\section{Conclusion and Future Work} 
\label{sec:conclusion}

{We present POLAR-Express, a formal reachability verification tool for NNCSs, which uses layer-by-layer propagation of TMs to compute function over-approximations of neural-network controllers. We provide a comprehensive comparison of POLAR-Express with existing tools and show that POLAR-Express can achieve state-of-the-art efficiency and tightness in reachable set computations.
On the other hand, current techniques still do not scale well to high-dimensional cases. In our experiment, the performance of Verisig 2.0 degrades significantly for the 6-dimensional examples, and POLAR-Express is also less efficient in the QUAD20 example. We believe state dimensions, control step sizes, and the numbers of total control steps are the key factors in scalability. 
As TMs are parameterized by state variables, higher state dimensions will lead to a more tedious polynomial expression in the TMs. 
Meanwhile, a large control step or a large number of total control steps can make it more difficult to propagate the state dependencies across the plant dynamics and across multiple control steps. We believe that addressing these scalability issues
will be the main subject of future work in NNCS reachability analysis.


\smallskip
\noindent\textbf{Acknowledgement}. We gratefully acknowledge the support from the National Sci- ence Foundation awards CCF-1646497, CCF-1834324, CNS-1834701, CNS-1839511, IIS-1724341, CNS-2038853, ONR grant N00014-19-1-2496, and the US Air Force Research Laboratory (AFRL) under contract number FA8650-16-C-2642.


\bibliographystyle{IEEEtran}
\bibliography{IEEEabrv,./chao,./jiameng,./xin,./weichao,./zhilu}

\begin{thebibliography}{10}
\providecommand{\url}[1]{#1}
\csname url@samestyle\endcsname
\providecommand{\newblock}{\relax}
\providecommand{\bibinfo}[2]{#2}
\providecommand{\BIBentrySTDinterwordspacing}{\spaceskip=0pt\relax}
\providecommand{\BIBentryALTinterwordstretchfactor}{4}
\providecommand{\BIBentryALTinterwordspacing}{\spaceskip=\fontdimen2\font plus
\BIBentryALTinterwordstretchfactor\fontdimen3\font minus
  \fontdimen4\font\relax}
\providecommand{\BIBforeignlanguage}[2]{{%
\expandafter\ifx\csname l@#1\endcsname\relax
\typeout{** WARNING: IEEEtran.bst: No hyphenation pattern has been}%
\typeout{** loaded for the language `#1'. Using the pattern for}%
\typeout{** the default language instead.}%
\else
\language=\csname l@#1\endcsname
\fi
#2}}
\providecommand{\BIBdecl}{\relax}
\BIBdecl

\bibitem{bojarski2016end}
M.~Bojarski, D.~Del~Testa, D.~Dworakowski, B.~Firner, B.~Flepp, P.~Goyal, L.~D.
  Jackel, M.~Monfort, U.~Muller, J.~Zhang \emph{et~al.}, ``End to end learning
  for self-driving cars,'' \emph{arXiv preprint arXiv:1604.07316}, 2016.

\bibitem{liu2022physics}
X.~Liu, C.~Huang, Y.~Wang, B.~Zheng, and Q.~Zhu, ``Physics-aware safety-assured
  design of hierarchical neural network based planner,'' in \emph{2022 ACM/IEEE
  13th International Conference on Cyber-Physical Systems (ICCPS)}.\hskip 1em
  plus 0.5em minus 0.4em\relax IEEE, 2022, pp. 137--146.

\bibitem{julian2016policy}
K.~D. Julian, J.~Lopez, J.~S. Brush, M.~P. Owen, and M.~J. Kochenderfer,
  ``Policy compression for aircraft collision avoidance systems,'' in
  \emph{2016 IEEE/AIAA 35th Digital Avionics Systems Conference (DASC)}.\hskip
  1em plus 0.5em minus 0.4em\relax IEEE, 2016, pp. 1--10.

\bibitem{levine2016end}
S.~Levine, C.~Finn, T.~Darrell, and P.~Abbeel, ``End-to-end training of deep
  visuomotor policies,'' \emph{The Journal of Machine Learning Research},
  vol.~17, no.~1, pp. 1334--1373, 2016.

\bibitem{xu2020one}
S.~Xu, Y.~Wang, Y.~Wang, Z.~O'Neill, and Q.~Zhu, ``One for many: Transfer
  learning for building hvac control,'' in \emph{Proceedings of the 7th ACM
  international conference on systems for energy-efficient buildings, cities,
  and transportation}, 2020, pp. 230--239.

\bibitem{xu2021learning}
S.~Xu, Y.~Fu, Y.~Wang, Z.~O'Neill, and Q.~Zhu, ``Learning-based framework for
  sensor fault-tolerant building hvac control with model-assisted learning,''
  in \emph{Proceedings of the 8th ACM International Conference on Systems for
  Energy-Efficient Buildings, Cities, and Transportation}, 2021, pp. 1--10.

\bibitem{Julian2017}
K.~Julian and M.~J. Kochenderfer, ``Neural network guidance for {UAV}s,'' in
  \emph{{AIAA} Guidance Navigation and Control Conference (GNC)}, 2017.

\bibitem{mnih2015human}
V.~Mnih, K.~Kavukcuoglu, D.~Silver, A.~A. Rusu, J.~Veness, M.~G. Bellemare,
  A.~Graves, M.~Riedmiller, A.~K. Fidjeland, G.~Ostrovski \emph{et~al.},
  ``Human-level control through deep reinforcement learning,'' \emph{Nature},
  vol. 518, no. 7540, p. 529, 2015.

\bibitem{Lillicrap2016ContinuousCW}
T.~P. Lillicrap, J.~J. Hunt, A.~Pritzel, N.~Heess, T.~Erez, Y.~Tassa,
  D.~Silver, and D.~Wierstra, ``Continuous control with deep reinforcement
  learning,'' \emph{CoRR}, vol. abs/1509.02971, 2016.

\bibitem{abbeel2004apprenticeship}
P.~Abbeel and A.~Y. Ng, ``Apprenticeship learning via inverse reinforcement
  learning,'' in \emph{Proceedings of the twenty-first international conference
  on Machine learning}, 2004, p.~1.

\bibitem{ng2000algorithms}
A.~Y. Ng, S.~J. Russell \emph{et~al.}, ``Algorithms for inverse reinforcement
  learning.'' in \emph{Icml}, vol.~1, 2000, p.~2.

\bibitem{pan2018agile}
Y.~Pan, C.-A. Cheng, K.~Saigol, K.~Lee, X.~Yan, E.~Theodorou, and B.~Boots,
  ``Agile autonomous driving using end-to-end deep imitation learning,''
  \emph{Proceedings of Robotics: Science and Systems. Pittsburgh,
  Pennsylvania}, 2018.

\bibitem{huang2019reachnn}
C.~Huang, J.~Fan, W.~Li, X.~Chen, and Q.~Zhu, ``Reachnn: Reachability analysis
  of neural-network controlled systems,'' \emph{ACM Transactions on Embedded
  Computing Systems (TECS)}, vol.~18, no.~5s, pp. 1--22, 2019.

\bibitem{Dutta_Others__2019__Reachability}
S.~Dutta, X.~Chen, and S.~Sankaranarayanan, ``Reachability analysis for neural
  feedback systems using regressive polynomial rule inference,'' in
  \emph{Hybrid Systems: Computation and Control (HSCC)}.\hskip 1em plus 0.5em
  minus 0.4em\relax ACM Press, 2019, pp. 157--168.

\bibitem{ivanov2018verisig}
R.~Ivanov, J.~Weimer, R.~Alur, G.~J. Pappas, and I.~Lee, ``Verisig: verifying
  safety properties of hybrid systems with neural network controllers,'' in
  \emph{Proceedings of the 22nd ACM International Conference on Hybrid Systems:
  Computation and Control}, 2019, pp. 169--178.

\bibitem{tran2020nnv}
H.-D. Tran, X.~Yang, D.~Manzanas~Lopez, P.~Musau, L.~V. Nguyen, W.~Xiang,
  S.~Bak, and T.~T. Johnson, ``Nnv: the neural network verification tool for
  deep neural networks and learning-enabled cyber-physical systems,'' in
  \emph{International Conference on Computer Aided Verification}.\hskip 1em
  plus 0.5em minus 0.4em\relax Springer, 2020, pp. 3--17.

\bibitem{zhang2020machine}
J.~M. Zhang, M.~Harman, L.~Ma, and Y.~Liu, ``Machine learning testing: Survey,
  landscapes and horizons,'' \emph{IEEE Transactions on Software Engineering},
  2020.

\bibitem{Alur+/1995/hybrid_systems}
R.~Alur, C.~Courcoubetis, N.~Halbwachs, T.~A. Henzinger, P.-H. Ho, X.~Nicollin,
  A.~Olivero, J.~Sifakis, and S.~Yovine, ``The algorithmic analysis of hybrid
  systems,'' \emph{Theor. Comput. Sci.}, vol. 138, no.~1, pp. 3--34, 1995.

\bibitem{Nedialkov/2011/vnode-lp}
N.~S. Nedialkov, ``Implementing a rigorous ode solver through literate
  programming,'' in \emph{Modeling, Design, and Simulation of Systems with
  Uncertainties}, ser. Mathematical Engineering, A.~Rauh and E.~Auer,
  Eds.\hskip 1em plus 0.5em minus 0.4em\relax Springer Berlin Heidelberg, 2011,
  vol.~3, ch. Mathematical Engineering, pp. 3--19.

\bibitem{Frehse+/2011/SpaceEx}
G.~Frehse, C.~{Le Guernic}, A.~Donz{\'e}, S.~Cotton, R.~Ray, O.~Lebeltel,
  R.~Ripado, A.~Girard, T.~Dang, and O.~Maler, ``Spaceex: Scalable verification
  of hybrid systems,'' in \emph{Proc. of CAV'11}, ser. LNCS, vol. 6806.\hskip
  1em plus 0.5em minus 0.4em\relax Springer, 2011, pp. 379--395.

\bibitem{Althoff/2015/CORA}
M.~Althoff, ``An introduction to cora 2015,'' in \emph{Proc. of ARCH'15}, ser.
  EPiC Series in Computer Science, vol.~34.\hskip 1em plus 0.5em minus
  0.4em\relax EasyChair, 2015, pp. 120--151.

\bibitem{Chen+/2013/flowstar}
X.~Chen, E.~{\'A}brah{\'a}m, and S.~Sankaranarayanan, ``Flow*: An analyzer for
  non-linear hybrid systems,'' in \emph{Proc. of CAV'13}, ser. LNCS, vol.
  8044.\hskip 1em plus 0.5em minus 0.4em\relax Springer, 2013, pp. 258--263.

\bibitem{Kong+/2015/dReach}
S.~Kong, S.~Gao, W.~Chen, and E.~M. Clarke, ``dreach: {\(\delta\)}-reachability
  analysis for hybrid systems,'' in \emph{Proc. of TACAS'15}, ser. LNCS, vol.
  9035.\hskip 1em plus 0.5em minus 0.4em\relax Springer, 2015, pp. 200--205.

\bibitem{undecidability_ODE}
D.~S. Graça, J.~Buescu, and M.~L. Campagnolo, ``Boundedness of the domain of
  definition is undecidable for polynomial odes,'' \emph{Electronic Notes in
  Theoretical Computer Science}, vol. 202, pp. 49--57, 2008, proceedings of the
  Fourth International Conference on Computability and Complexity in Analysis
  (CCA 2007).

\bibitem{dreossi2016parallelotope}
T.~Dreossi, T.~Dang, and C.~Piazza, ``Parallelotope bundles for polynomial
  reachability,'' in \emph{HSCC}.\hskip 1em plus 0.5em minus 0.4em\relax ACM,
  2016, pp. 297--306.

\bibitem{lygeros1999controllers}
J.~Lygeros, C.~Tomlin, and S.~Sastry, ``Controllers for reachability
  specifications for hybrid systems,'' \emph{Automatica}, vol.~35, no.~3, pp.
  349--370, 1999.

\bibitem{yang2016linear}
Z.~Yang, C.~Huang, X.~Chen, W.~Lin, and Z.~Liu, ``A linear programming
  relaxation based approach for generating barrier certificates of hybrid
  systems,'' in \emph{FM}.\hskip 1em plus 0.5em minus 0.4em\relax Springer,
  2016, pp. 721--738.

\bibitem{prajna2004safety}
S.~Prajna and A.~Jadbabaie, ``Safety verification of hybrid systems using
  barrier certificates,'' in \emph{HSCC}.\hskip 1em plus 0.5em minus
  0.4em\relax Springer, 2004, pp. 477--492.

\bibitem{huang2017probabilistic}
C.~Huang, X.~Chen, W.~Lin, Z.~Yang, and X.~Li, ``Probabilistic safety
  verification of stochastic hybrid systems using barrier certificates,''
  \emph{TECS}, vol.~16, no.~5s, p. 186, 2017.

\bibitem{huang2017safety}
X.~Huang, M.~Kwiatkowska, S.~Wang, and M.~Wu, ``Safety verification of deep
  neural networks,'' in \emph{International Conference on Computer Aided
  Verification}.\hskip 1em plus 0.5em minus 0.4em\relax Springer, 2017, pp.
  3--29.

\bibitem{katz2017reluplex}
G.~Katz, C.~Barrett, D.~L. Dill, K.~Julian, and M.~J. Kochenderfer, ``Reluplex:
  An efficient smt solver for verifying deep neural networks,'' in
  \emph{International Conference on Computer Aided Verification}.\hskip 1em
  plus 0.5em minus 0.4em\relax Springer, 2017, pp. 97--117.

\bibitem{Dutta+Others/2017/Reachability}
S.~Dutta, S.~Jha, S.~Sankaranarayanan, and A.~Tiwari, ``Learning and
  verification of feedback control systems using feedforward neural networks,''
  \emph{IFAC-PapersOnLine}, vol.~51, no.~16, pp. 151--156, 2018.

\bibitem{wang2018formal}
S.~Wang, K.~Pei, J.~Whitehouse, J.~Yang, and S.~Jana, ``Formal security
  analysis of neural networks using symbolic intervals,'' in \emph{27th
  $\{$USENIX$\}$ Security Symposium ($\{$USENIX$\}$ Security 18)}, 2018, pp.
  1599--1614.

\bibitem{singh2018fast}
G.~Singh, T.~Gehr, M.~Mirman, M.~P{\"u}schel, and M.~Vechev, ``Fast and
  effective robustness certification,'' in \emph{Advances in Neural Information
  Processing Systems}, 2018, pp. 10\,802--10\,813.

\bibitem{wang2021beta}
S.~Wang, H.~Zhang, K.~Xu, X.~Lin, S.~Jana, C.-J. Hsieh, and J.~Z. Kolter,
  ``Beta-crown: Efficient bound propagation with per-neuron split constraints
  for neural network robustness verification,'' \emph{Advances in Neural
  Information Processing Systems}, vol.~34, 2021.

\bibitem{Moore+Others/2009/Interval}
R.~E. Moore, R.~B. Kearfott, and M.~J. Cloud, \emph{Introduction to Interval
  Analysis}.\hskip 1em plus 0.5em minus 0.4em\relax SIAM, 2009.

\bibitem{zhang2018efficient}
H.~Zhang, T.-W. Weng, P.-Y. Chen, C.-J. Hsieh, and L.~Daniel, ``Efficient
  neural network robustness certification with general activation functions,''
  \emph{Advances in neural information processing systems}, vol.~31, 2018.

\bibitem{fan2020reachnn}
J.~Fan, C.~Huang, X.~Chen, W.~Li, and Q.~Zhu, ``Reachnn*: A tool for
  reachability analysis of neural-network controlled systems,'' in
  \emph{International Symposium on Automated Technology for Verification and
  Analysis}.\hskip 1em plus 0.5em minus 0.4em\relax Springer, 2020, pp.
  537--542.

\bibitem{verisig2}
R.~Ivanov, T.~Carpenter, J.~Weimer, R.~Alur, G.~Pappas, and I.~Lee, ``Verisig
  2.0: Verification of neural network controllers using taylor model
  preconditioning,'' in \emph{Computer Aided Verification}, A.~Silva and
  K.~R.~M. Leino, Eds.\hskip 1em plus 0.5em minus 0.4em\relax Cham: Springer
  International Publishing, 2021, pp. 249--262.

\bibitem{ivanov2020verifying}
R.~Ivanov, T.~J. Carpenter, J.~Weimer, R.~Alur, G.~J. Pappas, and I.~Lee,
  ``Verifying the safety of autonomous systems with neural network
  controllers,'' \emph{ACM Transactions on Embedded Computing Systems (TECS)},
  vol.~20, no.~1, pp. 1--26, 2020.

\bibitem{huang2022polar}
C.~Huang, J.~Fan, X.~Chen, W.~Li, and Q.~Zhu, ``Polar: A polynomial arithmetic
  framework for verifying neural-network controlled systems,'' in
  \emph{Automated Technology for Verification and Analysis: 20th International
  Symposium, ATVA 2022, Virtual Event, October 25--28, 2022,
  Proceedings}.\hskip 1em plus 0.5em minus 0.4em\relax Springer, 2022, pp.
  414--430.

\bibitem{DBLP:conf/cav/GoubaultP22}
E.~Goubault and S.~Putot, ``{RINO:} robust inner and outer approximated
  reachability of neural networks controlled systems,'' in \emph{Proc. of
  CAV'2022}, ser. LNCS, S.~Shoham and Y.~Vizel, Eds., vol. 13371.\hskip 1em
  plus 0.5em minus 0.4em\relax Springer, 2022, pp. 511--523.

\bibitem{DBLP:journals/corr/abs-2210-10691}
N.~Kochdumper, H.~Krasowski, X.~Wang, S.~Bak, and M.~Althoff, ``Provably safe
  reinforcement learning via action projection using reachability analysis and
  polynomial zonotopes,'' \emph{CoRR}, vol. abs/2210.10691, 2022.

\bibitem{DBLP:conf/aaai/0001FG22}
C.~Schilling, M.~Forets, and S.~Guadalupe, ``Verification of neural-network
  control systems by integrating taylor models and zonotopes,'' in \emph{Proc.
  of AAAI'2022}.\hskip 1em plus 0.5em minus 0.4em\relax {AAAI} Press, 2022, pp.
  8169--8177.

\bibitem{9301422}
M.~Fazlyab, M.~Morari, and G.~J. Pappas, ``Safety verification and robustness
  analysis of neural networks via quadratic constraints and semidefinite
  programming,'' \emph{IEEE Transactions on Automatic Control}, vol.~67, no.~1,
  pp. 1--15, 2022.

\bibitem{wang2020energy}
Y.~Wang, C.~Huang, and Q.~Zhu, ``Energy-efficient control adaptation with
  safety guarantees for learning-enabled cyber-physical systems,'' in
  \emph{Proceedings of the 39th International Conference on Computer-Aided
  Design}, 2020, pp. 1--9.

\bibitem{ChenS22}
X.~Chen and S.~Sankaranarayanan, ``Reachability analysis for cyber-physical
  systems: Are we there yet?'' in \emph{{NASA} Formal Methods - 14th
  International Symposium, {NFM} 2022, Pasadena, CA, USA, May 24-27, 2022,
  Proceedings}, ser. Lecture Notes in Computer Science, vol. 13260.\hskip 1em
  plus 0.5em minus 0.4em\relax Springer, 2022, pp. 109--130.

\bibitem{Jaulin+/2001/applied_interval_analysis}
L.~Jaulin, M.~Kieffer, O.~Didrit, and E.~Walter, \emph{Applied Interval
  Analysis}.\hskip 1em plus 0.5em minus 0.4em\relax Springer, 2001.

\bibitem{Ziegler/1995/Lectures}
G.~M. Ziegler, \emph{Lectures on Polytopes}, ser. Graduate Texts in
  Mathematics.\hskip 1em plus 0.5em minus 0.4em\relax Springer, 1995, vol. 152.

\bibitem{duttoutputa2018}
S.~Dutta, S.~Jha, S.~Sankaranarayanan, and A.~Tiwari, ``Output range analysis
  for deep feedforward neural networks,'' in \emph{NASA Formal Methods
  Symposium}.\hskip 1em plus 0.5em minus 0.4em\relax Springer, 2018, pp.
  121--138.

\bibitem{tjeng2019evaluating}
V.~Tjeng, K.~Xiao, and R.~Tedrake, ``Evaluating robustness of neural networks
  with mixed integer programming,'' \emph{International Conference on Learning
  Representations}, 2019.

\bibitem{cheng2017maximum}
C.-H. Cheng, G.~Nührenberg, and H.~Ruess, ``Maximum resilience of artificial
  neural networks,'' in \emph{International Symposium on Automated Technology
  for Verification and Analysis}.\hskip 1em plus 0.5em minus 0.4em\relax
  Springer, 2017, Conference Proceedings, pp. 251--268.

\bibitem{Lomuscio+Maganti/2017/Approach}
\BIBentryALTinterwordspacing
A.~Lomuscio and L.~Maganti, ``An approach to reachability analysis for
  feed-forward relu neural networks,'' \emph{CoRR}, vol. abs/1706.07351, 2017.
  [Online]. Available: \url{http://arxiv.org/abs/1706.07351}
\BIBentrySTDinterwordspacing

\bibitem{ruan2018reachability}
W.~Ruan, X.~Huang, and M.~Kwiatkowska, ``Reachability analysis of deep neural
  networks with provable guarantees,'' in \emph{Proceedings of the 27th
  International Joint Conference on Artificial Intelligence}, 2018, pp.
  2651--2659.

\bibitem{gehr2018ai2}
T.~Gehr, M.~Mirman, D.~Drachsler-Cohen, P.~Tsankov, S.~Chaudhuri, and
  M.~Vechev, ``Ai2: Safety and robustness certification of neural networks with
  abstract interpretation,'' in \emph{2018 IEEE Symposium on Security and
  Privacy (SP)}.\hskip 1em plus 0.5em minus 0.4em\relax IEEE, 2018, pp. 3--18.

\bibitem{IvanovCWAPL21}
R.~Ivanov, T.~J. Carpenter, J.~Weimer, R.~Alur, G.~J. Pappas, and I.~Lee,
  ``Verisig 2.0: Verification of neural network controllers using taylor model
  preconditioning,'' in \emph{Proceedings of the 33rd International Conference
  on Computer Aided Verification (CAV'21)}, ser. LNCS, vol. 12759.\hskip 1em
  plus 0.5em minus 0.4em\relax Springer, 2021, pp. 249--262.

\bibitem{Berz+Makino/1998/Verified}
M.~Berz and K.~Makino, ``Verified integration of {ODE}s and flows using
  differential algebraic methods on high-order {T}aylor models,''
  \emph{Reliable Computing}, vol.~4, pp. 361--369, 1998.

\bibitem{Chen+/2012/taylor_models}
X.~Chen, E.~{\'A}brah{\'a}m, and S.~Sankaranarayanan, ``Taylor model flowpipe
  construction for non-linear hybrid systems,'' in \emph{Proc.\ of
  RTSS'12}.\hskip 1em plus 0.5em minus 0.4em\relax IEEE Computer Society, 2012,
  pp. 183--192.

\bibitem{Chen/2015/phd}
X.~Chen, ``Reachability analysis of non-linear hybrid systems using taylor
  models,'' Ph.D. dissertation, RWTH Aachen University, 2015.

\bibitem{Chen+Sankaranarayanan/2016/Decomposed}
X.~Chen and S.~Sankaranarayanan, ``Decomposed reachability analysis for
  nonlinear systems,'' in \emph{2016 IEEE Real-Time Systems Symposium
  (RTSS)}.\hskip 1em plus 0.5em minus 0.4em\relax {IEEE} Press, Nov 2016, pp.
  13--24.

\bibitem{polynomial_zonotopes}
M.~Althoff, ``Reachability analysis of nonlinear systems using conservative
  polynomialization and non-convex sets,'' in \emph{Proc. of HSCC 2013}.\hskip
  1em plus 0.5em minus 0.4em\relax ACM, 2013, p. 173–182.

\bibitem{Perko/2006/ODE}
L.~Perko, \emph{Differential Equations and Dynamical Systems (3rd
  edition)}.\hskip 1em plus 0.5em minus 0.4em\relax Springer, 2006.

\bibitem{ravaioli2022safe}
U.~Ravaioli, J.~Cunningham, J.~McCarroll, V.~Gangal, K.~Dunlap, and K.~Hobbs,
  ``Safe reinforcement learning benchmark environments for aerospace control
  systems,'' in \emph{2022 IEEE Aerospace Conference}.\hskip 1em plus 0.5em
  minus 0.4em\relax IEEE, 2022.

\bibitem{10.1007/BFb0109720}
V.~D. Blondel and J.~N. Tsitsiklis, ``Overview of complexity and decidability
  results for three classes of elementary nonlinear systems,'' in
  \emph{Learning, control and hybrid systems}.\hskip 1em plus 0.5em minus
  0.4em\relax Springer London, 1999, pp. 46--58.

\bibitem{Makino+Berz/2003/Taylor}
K.~Makino and M.~Berz, ``{T}aylor models and other validated functional
  inclusion methods,'' \emph{J. Pure and Applied Mathematics}, vol.~4, no.~4,
  pp. 379--456, 2003.

\bibitem{weng2018evaluating}
T.-W. Weng, H.~Zhang, P.-Y. Chen, J.~Yi, D.~Su, Y.~Gao, C.-J. Hsieh, and
  L.~Daniel, ``Evaluating the robustness of neural networks: An extreme value
  theory approach,'' \emph{arXiv preprint arXiv:1801.10578}, 2018.

\bibitem{Lorentz/Bernstein}
G.~G. Lorentz, \emph{Bernstein Polynomials}.\hskip 1em plus 0.5em minus
  0.4em\relax American Mathematical Society, 2013.

\bibitem{goodman/phillips_1999}
T.~N.~T. Goodman, H.~Oruç, and G.~M. Phillips, ``Convexity and generalized
  bernstein polynomials,'' \emph{Proceedings of the Edinburgh Mathematical
  Society}, vol.~42, no.~1, p. 179–190, 1999.

\end{thebibliography}

\end{document}


\title{ReachNN: Reachability Analysis of Neural-Network\\ Controlled Systems -- Appendix}

\maketitle




\begin{lemma}\label{lem:bernstein_bound}
 Given that $p_B(x)$ is the order $k\geq 1$ Bernstein polynomial for the function $\relu(x)$ over $x\in [a,b]$ such that $a < 0 < b$, then $\relu(x)\leq p_B(x)$ for all $a\in [a,b]$.
\end{lemma}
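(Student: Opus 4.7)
The plan is to exploit two facts: ReLU is convex on $\mathbb{R}$, and Bernstein polynomials always sit above convex functions. Concretely, writing $t = (x-a)/(b-a) \in [0,1]$ and $x_i = a + (i/k)(b-a)$ for $i = 0,\dots,k$, the Bernstein polynomial becomes
\[
 p_B(x) \;=\; \sum_{i=0}^{k} \relu(x_i)\, B_{i,k}(t), \qquad B_{i,k}(t) \;=\; \binom{k}{i} t^i (1-t)^{k-i}.
\]
I would first record that the basis functions $B_{i,k}(t)$ are nonnegative on $[0,1]$ and that $\sum_{i=0}^k B_{i,k}(t) = 1$ by the binomial theorem, so at each fixed $x \in [a,b]$ they form a probability distribution on the nodes $\{x_0,\dots,x_k\}$.

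Next, I would verify the ``identity-reproducing'' property that the mean of this distribution is $x$ itself:
\[
 \sum_{i=0}^{k} x_i\, B_{i,k}(t) \;=\; a\sum_{i=0}^{k} B_{i,k}(t) + (b-a)\sum_{i=0}^{k} \tfrac{i}{k} B_{i,k}(t) \;=\; a + (b-a)\,t \;=\; x,
\]
using the known first-moment identity $\sum_i (i/k) B_{i,k}(t) = t$ (the normalised mean of a $\mathrm{Binomial}(k,t)$ variable).

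With these in hand, the inequality drops out from Jensen's inequality applied to the convex function $\relu$:
\[
 \relu(x) \;=\; \relu\!\left(\sum_{i=0}^{k} x_i\, B_{i,k}(t)\right) \;\leq\; \sum_{i=0}^{k} \relu(x_i)\, B_{i,k}(t) \;=\; p_B(x).
\]
Since this holds for every $x \in [a,b]$, the claim follows. I do not anticipate a hard step: convexity of $\relu$ is immediate (it is the pointwise maximum of the two linear functions $0$ and $x$), and the only calculation is the first-moment identity for the Bernstein basis, which is a one-line consequence of differentiating $(t + (1-t))^k$. Note that the hypothesis $a < 0 < b$ is not actually used for the upper bound itself---the argument works on any interval---but it is presumably needed to make the lemma nontrivial (on $[a,b]$ with $b \leq 0$ or $a \geq 0$, ReLU is already linear and equals $p_B$).
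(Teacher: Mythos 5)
Your proof is correct, but it packages the argument differently from the paper. The paper splits $[a,b]$ at $0$ and compares $p_B$ against the two linear minorants of $\relu$ separately: on $[a,0]$ it notes $p_B(x)\geq 0=\relu(x)$ because every term of the Bernstein sum is nonnegative, and on $[0,b]$ it uses the linear-precision identity $\sum_j x_j B_{j,k}=x$ together with $\relu(x_j)\geq x_j$ to get $p_B(x)\geq x=\relu(x)$. You use exactly the same two Bernstein identities (nonnegativity plus partition of unity, and the first-moment/linear-precision identity), but combine them into a single application of Jensen's inequality for the convex function $\relu$. The two arguments are really the same proof seen from different angles --- the paper's case split amounts to verifying Jensen by hand using the two supporting lines $y=0$ and $y=x$ whose pointwise maximum is $\relu$ --- but your version is cleaner and strictly more general: it shows the Bernstein approximant over-approximates \emph{any} convex function on \emph{any} interval, which also justifies your (correct) observation that the hypothesis $a<0<b$ is not needed for the inequality, only to make it non-vacuous. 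The paper's version has the minor advantage of being self-contained (no appeal to Jensen) and of setting up the monotonicity bookkeeping reused in the second lemma.
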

\begin{proof}
 The polynomial $p_B(x)$ can be expressed as
 \[
  \sum_{j=0}^k \left[\relu(a + \frac{j}{k}(b-a))\binom{k}{j}\left(\frac{x-a}{b-a}\right)^j \left(\frac{b-x}{b-a}\right)^{k-j}\right]
 \]
 then it is no smaller than $0$ that is $\relu(x)$ for all $x\in [a,0]$.
 Since
 \[
  \sum_{j=0}^k \left[(a + \frac{j}{k}(b-a))\binom{k}{j}\left(\frac{x-a}{b-a}\right)^j \left(\frac{b-x}{b-a}\right)^{k-j}\right] = x
 \]
 for all $x\in [a,b]$, and $(a + \frac{j}{k}(b-a)) \leq \relu(a + \frac{j}{k}(b-a))$, we have that $x\leq p_B(x)$ for $x\in [a,b]$, and hence $\relu(x)\leq p_B(x)$ for $x\in [0,b]$.
\end{proof}

\begin{lemma}
 Given that $p_B(x)$ is the order $k\geq 1$ Bernstein polynomial for the function $\relu(x)$ over $x\in [a,b]$ such that $a < 0 < b$, then we have that $|\relu(x) - p_B(x)|\leq p_B(0)$ for all $a\in [a,b]$.
\end{lemma}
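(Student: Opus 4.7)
The plan is to leverage Lemma~\ref{lem:bernstein_bound}, which gives $\relu(x)\leq p_B(x)$, so the claim reduces to showing $p_B(x)-\relu(x)\leq p_B(0)$ for $x\in[a,b]$. The idea is to exhibit the error (in each of the two regimes $x\leq 0$ and $x\geq 0$) as a Bernstein polynomial of a monotone function, whereupon its maximum on $[a,b]$ must be attained at the kink $x=0$.

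First, I would exploit the identity $\relu(x)=x+\relu(-x)$. Let $q_B$ denote the order-$k$ Bernstein polynomial of the function $x\mapsto\relu(-x)$ on $[a,b]$. Since the Bernstein operator is linear and, as shown in the proof of Lemma~\ref{lem:bernstein_bound}, it reproduces the identity function on $[a,b]$, one obtains $p_B(x)=x+q_B(x)$. Consequently, for $x\in[0,b]$ we have $\relu(x)=x$ and hence $p_B(x)-\relu(x)=q_B(x)$, while for $x\in[a,0]$ we have $\relu(x)=0$ and hence $p_B(x)-\relu(x)=p_B(x)$. Evaluating at $x=0$ also yields the key equality $p_B(0)=q_B(0)$.

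Next, I would invoke the monotonicity-preservation property of the Bernstein operator. Differentiating the Bernstein representation gives
\[
p_B'(x)=\frac{k}{b-a}\sum_{j=0}^{k-1}\bigl[\relu(t_{j+1})-\relu(t_j)\bigr]\binom{k-1}{j}\left(\frac{x-a}{b-a}\right)^j\left(\frac{b-x}{b-a}\right)^{k-1-j},
\]
where $t_j=a+\frac{j}{k}(b-a)$. Since $\relu$ is non-decreasing, every coefficient $\relu(t_{j+1})-\relu(t_j)$ is non-negative, so $p_B$ is non-decreasing on $[a,b]$; the same identity applied to $x\mapsto\relu(-x)$, which is non-increasing, shows that $q_B$ is non-increasing. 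Hence $p_B(x)\leq p_B(0)$ for $x\in[a,0]$ and $q_B(x)\leq q_B(0)=p_B(0)$ for $x\in[0,b]$, which, combined with the reduction in the first step, yields $|\relu(x)-p_B(x)|\leq p_B(0)$ on all of $[a,b]$.

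The only step that really needs justification is the monotonicity claim; once the decomposition $p_B(x)=x+q_B(x)$ is in hand, the rest is a clean case split around the kink $x=0$. The bound is in fact tight, since the error $p_B(0)-\relu(0)=p_B(0)$ is attained at $x=0$.
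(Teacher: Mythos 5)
Your proof is correct, and it reorganizes the argument in a way that is genuinely cleaner than the paper's. The paper works with $p_B$ alone: it splits into $k=1$ and $k\geq 2$, computes both $dp_B/dx$ and $d^2p_B/dx^2$ via forward differences, establishes $0\leq p_B'\leq 1$, and concludes that $p_B-\relu$ increases on $[a,0]$ and decreases on $[0,b]$. You instead use the identity $\relu(x)=x+\relu(-x)$, linearity of the Bernstein operator, and its reproduction of the identity (already recorded in the proof of Lemma~\ref{lem:bernstein_bound}) to write $p_B=x+q_B$; then the only fact you need is monotonicity preservation, applied twice: $p_B$ is non-decreasing, so the error $p_B(x)-\relu(x)=p_B(x)$ on $[a,0]$ peaks at $0$, and $q_B$ is non-increasing, so the error $q_B(x)$ on $[0,b]$ peaks at $0$, with $q_B(0)=p_B(0)$ tying the two cases together. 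This buys a uniform treatment of all $k\geq 1$, dispenses with the upper bound $p_B'\leq 1$ and with the second-derivative computation entirely (the paper's convexity claim is not actually used in its conclusion), and makes the tightness at $x=0$ immediate. Both arguments ultimately rest on the same key fact -- non-negativity of the forward differences of a monotone function in the Bernstein derivative formula -- and both correctly reduce the absolute value to a one-sided bound via Lemma~\ref{lem:bernstein_bound}, so your route is a legitimate and arguably preferable alternative.
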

\begin{proof}
 We first prove that the lemma holds when $k = 1$. The Bernstein polynomial is defined by $p_B(x) = b \left(\frac{x-a}{b-a}\right)$ whose derivative is $\frac{b}{b-a}$ which is between $0$ and $1$. Since the derivative of $\relu(x)$ is $0$ for $x\in (a,0)$ and $1$ for $x\in (0,b)$, and by Lemma~\ref{lem:bernstein_bound}, we have that the difference $p_B(x) - \relu(x)$ is (i) positive in $x\in [a,b]$, (ii) monotonically increasing when $x\in [a,0]$, and (iii) monotonically decreasing when $x\in [0,b]$, therefore $|p_B(x) - \relu(x)|\leq p_B(0) - \relu(0) = p_B(0)$ for all $x\in [a,b]$.

 When $k \geq 2$, by Theorem 7.1.3 in \cite{Phillips/2003/polynomials}, we have that
 \[
  \begin{aligned}
  \frac{d p_B}{d x} = & k \sum_{j=0}^{k-1} \left[ \Delta\relu\left(a+\frac{j}{k}(b-a)\right)
  \binom{k-1}{j} \right. \\
  & \left.\left(\frac{x-a}{b-a}\right)^j \left(\frac{b-x}{b-a}\right)^{k-1-j}\right]
  \end{aligned}
 \]
 and
 \[
  \small
  \begin{aligned}
  \frac{d^2 p_B}{d x^2} = & \frac{k!}{(k-2)!} \sum_{j=0}^{k-2} \left[ \Delta^2\relu\left(a+\frac{j}{k}(b-a)\right) \right. \\
  & \binom{k-2}{j} \left.\left(\frac{x-a}{b-a}\right)^j \left(\frac{b-x}{b-a}\right)^{k-2-j}\right]
  \end{aligned}
 \]
 wherein
 \[
  \begin{aligned}
   & \Delta\relu\left(a+\frac{j}{k}(b-a)\right) \\
   = & \relu\left(a+\frac{j+1}{k}(b-a)\right) - \relu\left(a+\frac{j}{k}(b-a)\right)
  \end{aligned}
 \]
 \[
   \begin{aligned}
   & \Delta^2\relu\left(a+\frac{j}{k}(b-a)\right) \\
   = & \Delta\relu\left(a+\frac{j+1}{k}(b-a)\right) - \Delta\relu\left(a+\frac{j}{k}(b-a)\right)
  \end{aligned}
 \]
 We have that $0\leq dp_B/dt \leq 1$, and $d^2 p_B/dt^2 \geq 0$, since for all $k'\geq 0$, we have that
 \[
  \sum_{j=0}^{k'} \left[ \binom{k'}{j} \left(\frac{x-a}{b-a}\right)^j \left(\frac{b-x}{b-a}\right)^{k'-j}\right] = 1
 \]
 and
 \[
  \small
  \begin{aligned}
  & \relu\left(a + \frac{j+1}{k}(b-a)\right) - \relu(a + \frac{j}{k}(b-a)) \leq \frac{1}{k(b-a)}\,, \\
  & \Delta \relu\left(a + \frac{j+1}{k}(b-a)\right) \geq \Delta \relu\left(a + \frac{j}{k}(b-a)\right)
  \end{aligned}
 \]
 Similar to the case of $k=1$, we may prove that the difference $p_B(x) - \relu(x)$ still satisfies the properties (i), (ii) and (iii), and then $|p_B(x) - \relu(x)|\leq p_B(0) - \relu(0) = p_B(0)$ for all $x\in [a,b]$.
\end{proof}
